\newtheorem{proposition}{Proposition}
\newtheorem{lemma}{Lemma}
\DeclareMathAlphabet{\mathbit}{OML}{cmr}{bx}{it}
\DeclareMathAlphabet{\mathsf}{OT1}{cmss}{m}{n}
\DeclareMathAlphabet{\mathTXf}{OT1}{cmss}{bx}{it}
\DeclareMathOperator{\diag}{diag}
\newcommand{\bA}{\mathbf{A}} 
\newcommand{\bB}{\mathbf{B}}
\newcommand{\bR}{\mathbf{R}}
\newcommand{\bh}{\bm{h}}
\newcommand{\LB}{\left(}
\newcommand{\RB}{\right)}
\newcommand{\LSB}{\left[}
\newcommand{\RSB}{\right]}
\newcommand{\tr}{{\text{tr}}}
\newcommand{\E}{{\mathbb{E}}}
\newcommand{\He}{{{\mathrm{H}}}}
\theoremstyle{remark}
\newtheorem{remark}{Remark} 
\renewcommand{\maketag@@@}[1]{\hbox{\m@th\normalsize\normalfont#1}}%
\begin{document}

\title{Coordinated Shared Spectrum Precoding with Distributed CSIT}

\author{Miltiades~C.~Filippou,~\IEEEmembership{Member,~IEEE,}
        Paul~de~Kerret,~\IEEEmembership{Member,~IEEE,}
        David~Gesbert,~\IEEEmembership{Fellow,~IEEE,}
			  Tharmalingam~Ratnarajah,~\IEEEmembership{Senior~Member,~IEEE,}
				Adriano~Pastore,
				and George~A.~Ropokis
\thanks{This work is supported by the Seventh Framework Programme for Research of the European Commission under grant number ADEL-619647.}
\thanks{M.~C.~Filippou and T.~Ratnarajah are with the Institute for Digital Communications, University of Edinburgh, Edinburgh EH9 3FG, U.K. (e-mail: \{m.filippou, t.ratnarajah\}@ed.ac.uk).}
\thanks{P.~de~Kerret is with T\'el\'ecom Bretagne, IMT, UMR CNRS 3192 Lab-STICC, France (e-mail: paul.dekerret@telecom-bretagne.eu).}
\thanks{D. Gesbert is with EURECOM, Campus SophiaTech, 450 Route des Chappes, 06410 Biot, France (e-mail: gesbert@eurecom.fr).}
\thanks{A.~Pastore is with Laboratory for Information in Networked Systems, École polytechnique fédérale de Lausanne, Route Cantonale, 1015 Lausanne, Switzerland (e-mail: adriano.pastore@epfl.ch).}
\thanks{G.~A.~Ropokis is with the Computer Technology Institute and Press ``Diophantus'', Rio-Patras 26500, Greece (e-mail: ropokis@noa.gr).}
}

\IEEEpeerreviewmaketitle

\maketitle

\IEEEpeerreviewmaketitle

\begin{abstract}
In this paper, the operation of a Licensed Shared Access (LSA) system is investigated, considering downlink communication. The system comprises of a Multiple-Input-Single-Output (MISO) incumbent transmitter (TX) - receiver (RX) pair, which offers a spectrum sharing opportunity to a MISO licensee TX-RX pair. Our main contribution is the design of a \emph{coordinated} transmission scheme, inspired by the \emph{underlay} Cognitive Radio (CR) approach, with the aim of maximizing the average rate of the licensee, subject to an average rate constraint for the incumbent. In contrast to most prior works on underlay CR, the coordination of the two TXs takes place under a realistic Channel State Information (CSI) scenario, where each TX has sole access to the instantaneous direct channel of its served terminal. Such a CSI knowledge setting brings about a formulation based on the \emph{theory of Team Decisions}, whereby the TXs aim at optimizing a common objective given the same constraint set, on the basis of individual channel information. Consequently, a novel set of applicable precoding schemes is proposed. Relying on statistical coordination criteria, the two TXs cooperate in the lack of any instantaneous CSI exchange. We verify by simulations that our novel coordinated precoding scheme outperforms the standard underlay CR approach.
\end{abstract}
\begin{keywords}
Spectrum sharing, coordination, precoding, local CSI, QoS
\end{keywords}

\IEEEpeerreviewmaketitle

\section{Introduction}
\label{sec:intro}
The utilization of the radio spectrum is internationally regulated by governments, with the aim of providing wireless communication services that can be efficiently protected from harmful interference.~Nevertheless, the tremendous spread of wireless services has given rise to a great need for bandwidth, which cannot be satisfied by an exclusivity of spectral allocation.~On the other hand, \emph{spectrum scarcity}, i.e., the phenomenon where the radio spectrum is becoming vastly underutilized, has been discussed in reports, such as the one published by the Federal Communications Commission (FCC) in 2002 \cite{federal2002spectrum}.~The spotted under-utilization has given rise, in its turn, to the notion of \emph{Cognitive Radio} (CR), which has been  suggested as a promising technology in view of increasing wireless spectral efficiency by exploiting the existing \emph{spectrum holes} in time, frequency or space \cite{Mitola1999, Haykin2005}.

Focusing on the \emph{underlay} CR approach, a primary network allows the simultaneous use of its spectral resources by a new-coming (unlicensed) secondary network, given the condition that the latter will utilize the available resources in a way that the interference created by a secondary transmitter (TX) towards a primary receiver (RX) is below a threshold predefined by the primary network \cite{Goldsmith2009, Biglieri2012}.~Under such a setup, efficient schemes have been proposed, emphasizing on multiple-antenna settings, with the aim of maximizing the information rate of the secondary system, subject to given constraints over the harmful interference suffered by primary terminals \cite{Zhang2008, Kim2011, Zhang2009, Huang2010, Le2008}.~Importantly, in practice, the ability of the secondary TX to acquire global, multi-user Channel State Information (CSI) is unrealistic.~At this point, one would suggest an exchange protocol of channel information between TXs.~However, such a protocol would introduce exchange imperfections of channel estimates as well as delays, hence, leading to degraded performance of the designed precoding solution.
Consequently, an extensive literature has emphasized on designing transmission schemes, that are robust to imperfect CSI or merely requiring local channel knowledge [See \cite{Wang2013} and references therein].~In addition, iterative schemes, based on game theory, have been also investigated as a means of avoiding the need for global multi-user CSI exchange, with respect to spectrum sharing scenarios \cite{Wang2010,Scutari2010, Zhong2011}.

Yet, standard spectrum sharing approaches for underlay CR systems, merely focus on designing a transmission scheme for the secondary TX, such that the interference received by a primary terminal would not overcome a certain threshold, beyond which a secure primary connection cannot be established.~However, standardization bodies have lately focused on the design of Authorized or Licensed Shared Access systems (termed as ASA and LSA) \cite{cept2014report}, \cite{holdren2012realizing}.~The key difference between the latter systems and underlay CR systems is that the incumbents (equivalent to primary nodes in a CR system) can share the spectrum with the licensees (the licensed equivalent of secondary nodes in a CR system), \emph{provided} that Quality-of-Service (QoS) metrics, that have been negotiated prior to licensing, are satisfied for all involved entities.~Motivated by this new framework, it is evident that a major drawback of standard, interference temperature-based underlay CR systems consists in the lack of coordination between the primary and the secondary systems.~As a result, in such a way, the primary system tends to overspend its available resources, leading to poor throughput performance at the secondary side.

Given this situation, in this work, we propose the design of a coordination scheme for the two TXs, based on commonly available, slow-varying statistical information.~The goal of such a design will be the maximization of the (average) information rate of the licensee system, given that the achievable (average) rate of the incumbent system lies above a certain threshold.
More precisely, according to this scheme, the incumbent TX exploits its locally available CSI and the statistical (covariance) information of the global multi-user channel to coordinate with the licensee TX, in order to guarantee the desired average throughput for its assigned terminal.~Each TX has access to the instantaneous direct links of its assigned users (as well as their statistics), whereas, the interference cross-links are just statistically known at both TXs.~The availability of different estimates of the global downlink channel falls within the paradigm of \emph{Team Decision} theory, because both transmitters (incumbent and licensee) are actively engaged in cooperation, while being constrained by the locality of the available instantaneous CSI \cite{Radner1962, Ho1980, Zakhour2010, DeKerret2013, Filippou2013}.

In \cite{DeKerret2014}, a similar scenario was investigated for a single-user underlay CR setup, in the existence of spatially uncorrelated direct channel links, while, in the present work, we focus on the performance of an extended set of applicable joint precoding schemes, with the assumption of correlated Rayleigh fading for all the involved MISO channels.~More particularly, our contributions are the following:

\begin{itemize}
 \item We design a low complexity, statistically coordinated precoding scheme for a MISO spectrum sharing system, which can be applicable to a shared spectrum access system (ASA or LSA).~The goal of this design is the maximization of the throughput of the licensee network, in terms of the achievable average rate, given a QoS constraint on the average rate of the incumbent.

\item In order to design this joint precoding scheme, we derive lower bounds for the average rate of each RX, in closed form, both when a precoding solution based on Matched Filtering (MF) and statistical Zero-Forcing (sZF), is applied.~This way, since the original optimization problem is hard to solve, because of the existence of combined instantaneous and statistical CSI at each of the two TXs, we focus on an approximated version of it.

\item Focusing on the approximated version of the optimization problem, we design a set of applicable, joint precoding solutions, the elements of which are joint transmission schemes based on MF or sZF-based beamforming solutions.

\item For each of the joint transmission schemes that belong to the referred set, a power policy coordination criterion, which is based on the statistics of the global downlink channel, is applied, with the aim of finding the average transmit power levels at each TX.~These power levels are such that the average rate criterion on the incumbent RX is satisfied.~Then, the joint beamforming and power allocation scheme that maximizes the average throughput of the licensee RX is selected for transmission.~Since such a decision relies on commonly available statistical information, it is taken by both TXs coherently.

\item The novel, coordinated precoding scheme is numerically evaluated in comparison to the standard interference temperature-based underlay CR approach.~It is shown that our scheme outperforms the standard underlay CR one in a range of system scenarios, which makes our designed precoding policy appealing for LSA systems, since higher average throughput is achieved for the (candidate) licensee system.
\end{itemize}

Throughout the paper, the following notations are adopted: all boldface letters indicate vectors (lower case) or matrices (upper case).~${\mathbf{A}}^{\He}$, $\tr(\mathbf{A})$ and ${[\mathbf{A}]}_{m,n}$ denote the Hermitian transpose of matrix $\mathbf{A}$, its trace, and its $(m,n)$-th entry, respectively, whereas $\lambda_j(\mathbf{A})$ stands for its $j$-th eigenvalue.~Also, $\diag\left(\alpha_1,\ldots,\alpha_n\right)$ symbolizes a diagonal matrix, the elements of which are $\alpha_1,\ldots,\alpha_n$.~Additionally, $\mathbb{E}[\cdot]$ symbolizes the expectation operator and $\|\cdot\|$ denotes the Euclidean norm, while $\bm{0}_n$ denotes the all-zero vector of dimension $n$.~The identity matrix of dimension $n \times n$ is denoted by $\mathbf{I}_n$, while $\bar{i}$ denotes the complementary index of $i$, when the cardinality of the considered set is equal to two, i.e., $\bar{i} = i$ mod $2 + 1$.~The phase angle between two vectors $\bm{a}$ and $\bm{b}$ is denoted as $\phase{\bm{a}, \bm{b}}$ and it is defined as: $\phase{\bm{a}, \bm{b}} = \cos^{-1}\left(\frac{\bm{a}^H \bm{b}}{\|\bm{a}\| \|\bm{b}\|}\right)$, where $\cos^{-1}(\cdot)$ stands for the inverse cosine function.~For a random vector $\bm{x}$, $\bm{x} \sim \mathcal{CN}(\boldsymbol{\mu}, \mathbf{\Sigma})$ denotes that $\bm{x}$ follows a Circularly Symmetric Complex Gaussian (CSCG) distribution with mean $\boldsymbol{\mu}$ and covariance matrix $\mathbf{\Sigma}$.~Finally, $E_1(\cdot)$ represents the exponential integral function, which is defined in \cite[eq. (5.1.1)]{Abramowitz}, while $\gamma \approx 0.5772$ stands for the Euler-Mascheroni constant, as it is defined in \cite[eq. (4.1.32)]{Abramowitz}.

\section{System and Channel Model}
%
The spectrum sharing system, which is illustrated in Fig.~\ref{fig:SystemModel}, is composed of a MISO incumbent system, comprising of a TX, ${\textrm{TX}}~1$, equipped with $M_1$ antennas, along with its assigned single-antenna terminal, ${\textrm{RX}}~1$.~Focusing on downlink communication, the incumbent system is willing to share its resources with a MISO licensee system.~The latter system consists of a multiple antenna TX, ${\textrm{TX}}~2$, equipped with $M_2$ antennas, as well as of a licensee terminal, ${\textrm{RX}}~2$, assigned to ${\textrm{TX}}~2$.

Considering the involved channels, spatially correlated Rayleigh fading is assumed for both direct and interfering channel links.~As a consequence, for the channel between ${\textrm{TX}}~j$ and ${\textrm{RX}}~i$, we have: $\bm{h}_{i,j} \sim \mathcal{CN}(\bm{0}_{M_j},  \mathbf{R}_{i,j})$.
\psfrag{h11}{$\bm{h}_{1,1}$}
\psfrag{h22}{$\bm{h}_{2,2}$}
\psfrag{h12}{$\bm{h}_{1,2}$}
\psfrag{h21}{$\bm{h}_{2,1}$}
\psfrag{R11}{$\mathbf{R}_{1,1}$}
\psfrag{R12}{$\mathbf{R}_{1,2}$}
\psfrag{R21}{$\mathbf{R}_{2,1}$}
\psfrag{R22}{$\mathbf{R}_{2,2}$}
\psfrag{TX1}{${\textrm{TX}}~1$}
\psfrag{TX2}{${\textrm{TX}}~2$}
\psfrag{RX1}{${\textrm{RX}}~1$}
\psfrag{RX2}{${\textrm{RX}}~2$}
\psfrag{w1}{$\bm{w}_1$}
\psfrag{w2}{$\bm{w}_2$}
\psfrag{CSIT_at_TX_1}{CSIT at ${\textrm{TX}}~1$}
\psfrag{CSIT_at_TX_2}{CSIT at ${\textrm{TX}}~2$}

\begin{figure}[!ht]
  \centering
  \includegraphics[scale=0.65]{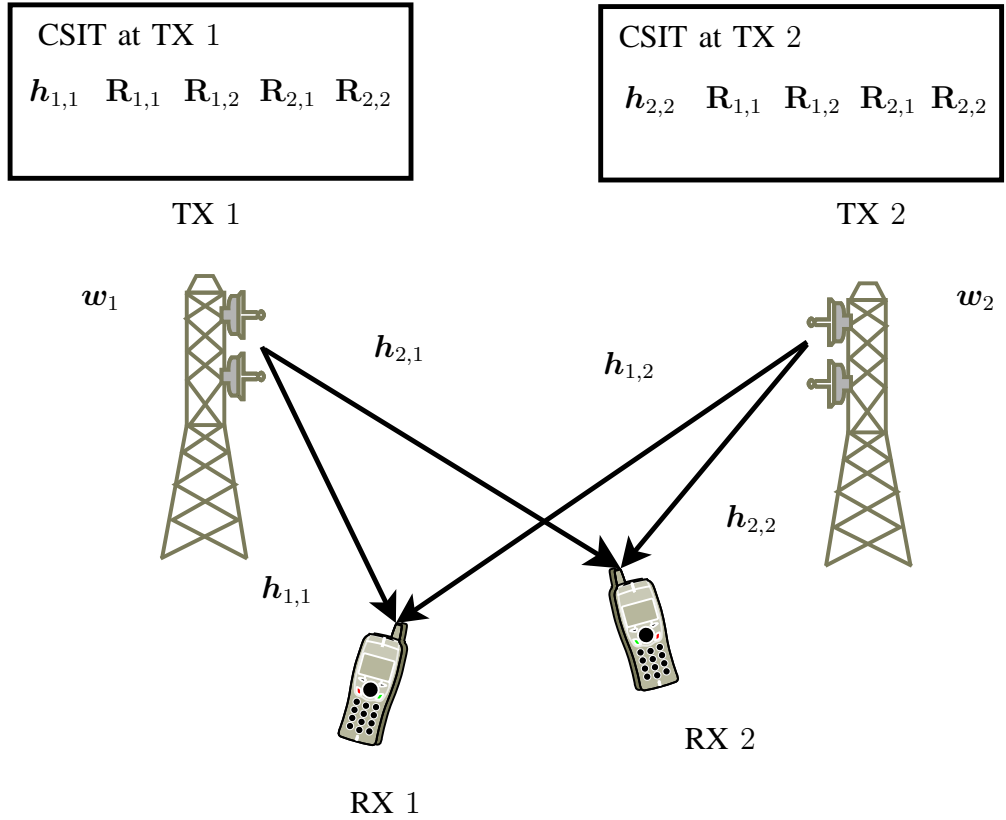}\\
	\vspace{-0.1in}
  \caption{The examined LSA system (post-licensing phase).} \label{fig:SystemModel}
\end{figure}
The signal received at ${\textrm{RX}}~i, \hspace{0.1in} i \in \{1,2\}$, can be expressed as
\begin{equation}
  y_i = \bm{h}_{i,i}^{\He} \bm{w}_i s_i + \bm{h}_{i,\bar{i}}^{\He} \bm{w}_{\bar{i}} s_{\bar{i}} + n_i,
\label{eq:signal_RXi}
\end{equation}
where, $\bm{w}_i$ denotes the transmit beamforming vector at ${\textrm{TX}}~i$ and it is assumed that $\bm{w}_i = \sqrt{P_i} \bm{u}_i$, with $P_i \leq P_i^{\max}$ and $\|\bm{u}_i\| = 1$, where $P_i^{\max}$ is a maximum instantaneous power level at ${\textrm{TX}}~i$.~Also, Gaussian noise is considered at ${\textrm{RX}}~i$, i.e., $n_i \sim \mathcal{CN}(0,N_0)$ and we assume that the information symbols for transmission are taken from a standard complex Gaussian codebook, i.e., $s_i \sim \mathcal{CN}(0,1), \hspace{0.1in} i \in \{1,2\}$.~By analyzing \eqref{eq:signal_RXi}, the instantaneous information rate of ${\textrm{RX}}~i, \hspace{0.1in} i \in \{1,2\}$ is given by \cite{cover2012elements}
\begin{equation}
  R_i = \log_2\Bigg(1 + \frac{P_i {|\bm{h}_{i,i}^{\He} \bm{u}_i|}^2}{N_0 +  P_{\bar{i}} {|\bm{h}_{i,\bar{i}}^{\He} \bm{u}_{\bar{i}}|}^2}\Bigg).
\label{eq:inst_rate_RXp}
\end{equation}
In the section that follows, the problem of joint downlink precoding with combined, local CSI at the TX (CSIT), is formulated.

\section{Problem Formulation}

Focusing on the described system model, a realistic CSIT assumption that can be made is that ${\textrm{TX}}~i, \hspace{0.1in} i\in\{1,2\}$, has both instantaneous and statistical (covariance) knowledge of its direct links (i.e., ${\textrm{TX}}~1$ has instantaneous knowledge of direct link $\bm{h}_{1,1}$ and ${\textrm{TX}}~2$ has instantaneous knowledge of direct link $\bm{h}_{2,2}$), whereas, the interference cross-links are merely statistically known via knowledge of their covariance matrices. The second order statistics of the involved channels constitute slow-varying information that can be realistically collected by each TX through low capacity/high delay links.

Capitalizing on the available CSIT at ${\textrm{TX}}~i, \hspace{0.1in} i \in \{1,2\}$, the optimization problem of maximizing the average rate of the licensee system, subject to an average rate constraint for ${\textrm{RX}}~1$ can be formulated as a \emph{functional} optimization problem, with functional dependencies related to the available CSI.~Hence, the resulting optimization problem can be described as follows
\begin{equation}
\begin{aligned}
  &\big(\bm{w}_1^{*}, \bm{w}_2^{*}\big) = \arg \max \E\LSB R_2\left(\bm{w}_1(\bm{h}_{1,1}), \bm{w}_2(\bm{h}_{2,2})\right)\RSB \\
	& {\textrm{subject}} \hspace{0.1in} {\textrm{to}} \hspace{0.1in} \E\LSB R_1\left(\bm{w}_1(\bm{h}_{1,1}), \bm{w}_2(\bm{h}_{2,2})\right)\RSB \geq \tau_1 > 0, \\
	& 0 \leq {\|\bm{w}_1(\bm{h}_{1,1})\|}^2 \leq P_1^{\max}, \hspace{0.1in} 0 \leq {\|\bm{w}_2(\bm{h}_{2,2})\|}^2 \leq P_2^{\max},
\end{aligned}
\tag{P1}
\label{eq:prob_orig}
\end{equation}
where $\tau_1$ stands for the QoS demand of $\textrm{RX}~1$, in terms of average rate.

The dependencies of the precoders to be optimized, on the corresponding instantaneous links, can be formulated by the following mappings
\begin{equation}
\begin{aligned}
  \bm{w}_i: \hspace{0.1in} \mathbb{C}^{M_i} &\rightarrow \mathbb{C}^{M_i} \\
	                             \bm{h}_{i,i} &\mapsto \bm{w}_i(\bm{h}_{i,i}),
\end{aligned}
\label{eq:depend_mapping_TXi}
\end{equation}
where $i \in \{1,2\}$.~Hence, the formulated optimization problem is of distributed nature.~In what follows, for ease of exposition, we will omit to mention explicitly the dependencies of the precoders.

One has to ensure that problem \eqref{eq:prob_orig} is feasible, in other words, QoS threshold~$\tau_1$ has to be achievable when ${\textrm{TX}}~1$ transmits with full power MF precoding, in the absence of any interference coming from the licensee. This means satisfying the following
\begin{equation}
  \mathbb{E}\LSB\log_2\bigg(1 + \frac{P_1^{\max} \|\bm{h}_{1,1}\|^2}{N_0}\bigg)\RSB \geq \tau_1.
\label{eq:achievable_tau}
\end{equation}
The expectation over the interfering channels makes the optimization difficult to handle. However, exploiting the convexity of function $\log_2\left(1 + \frac{1}{x}\right)$, it becomes possible to apply Jensen's inequality \cite{cover2012elements} over the interfering channels. This significantly simplifies the optimization problem, while preserving its important features. The average rate expression for RX~$i$, thus, becomes
\begin{equation}
\begin{aligned}
  \E\LSB R_i\RSB &= \E_{\bm{h}_{i,i},\bm{h}_{i,\bar{i}}} \LSB\log_2\left(1 + \frac{P_i {|\bm{h}_{i,i}^{\He} \bm{u}_i|}^2}{N_0 + P_{\bar{i}} {|\bm{h}_{i,\bar{i}}^{\He} \bm{u}_{\bar{i}}|}^2}\right)\RSB \\
	&\geq \E_{\bm{h}_{i,i}} \LSB \log_2\left(1 + \frac{P_i {|\bm{h}_{i,i}^{\He} \bm{u}_i|}^2}{N_0 + \E_{\bm{h}_{i,\bar{i}}}\LSB P_{\bar{i}} {|\bm{h}_{i,\bar{i}}^{\He} \bm{u}_{\bar{i}}|}^2\RSB}\right)\RSB \\
	&= \E_{\bm{h}_{i,i}} \LSB\log_2\left(1 + \frac{P_i {|\bm{h}_{i,i}^{\He} \bm{u}_i|}^2}{N_0 + P_{\bar{i}} \bm{u}_{\bar{i}}^{\He} \mathbf{R}_{i,\bar{i}} \bm{u}_{\bar{i}}}\right)\RSB \\
	&\triangleq \E\LSB\tilde{R}_i(\bm{w}_i, \bm{w}_{\bar{i}})\RSB.
\end{aligned}
\label{eq:jensen_p}
\end{equation}
\begin{remark}
It should be noted that this approach is only possible thanks to the fact that the precoders~$\bm{w}_1$ and~$\bm{w}_2$ are independent of the instantaneous cross-channels (as only the direct links are instantaneously known).\qed
\end{remark}

With the aim of deriving a practical solution, slow power control depending on the long term statistical channel information, is assumed.~Hence, instead of (instantaneous) power levels $P_1$ and $P_2$, we can use slow power allocation levels $\bar{P}_1$ and $\bar{P}_2$, where $0 \leq \bar{P}_i \leq P_i^{\max}, \hspace{0.1in} i \in \{1,2\}$.

Altogether, in the remainder of the paper, we will work on the following optimization problem:
\begin{equation}
\begin{aligned}
 &(\bar{P}_1^{*}, \bm{u}_1^{*}, \bar{P}_2^{*}, \bm{u}_2^{*}) = \arg \max ~~\E\LSB \tilde{R}_2(\bar{P}_1, \bm{u}_1, \bar{P}_2, \bm{u}_2)\RSB \\
&~~~~~~~~~~~~~~~~~~~~~~~~~~~~~~~~~ {\textrm{subject}} \hspace{0.1in} {\textrm{to}} \hspace{0.1in} \E\LSB \tilde{R}_1(\bar{P}_1, \bm{u}_1, \bar{P}_2, \bm{u}_2)\RSB \geq \tau_1, \\
&~~~~~~~~~~~~~~~~~~~~~~~~~~~~~~~~~ 0 \leq \bar{P}_1 \leq P_1^{\max}, \hspace{0.1in} 0 \leq \bar{P}_2 \leq P_2^{\max}, 
\\
&~~~~~~~~~~~~~~~~~~~~~~~~~~~~~~~~~ \|\bm{u}_1\| = 1, \hspace{0.1in} \|\bm{u}_2\| = 1.
\end{aligned}
\tag{P2}
\label{eq:prob_new}
\end{equation}
%

\section{Preliminary Results}\label{se:preliminaries}
%
The following two propositions provide some characteristics of the optimal solution of problem~\eqref{eq:prob_new}.
\begin{proposition}
\label{propos_1}
The ergodic rate constraint of ${\textrm{RX}}~1$ is satisfied with equality by any optimal solution of \eqref{eq:prob_new}, i.e., 
\begin{equation}
 \mathbb{E}\LSB \tilde{R}_1(\bar{P}_1^{*}, \bm{u}_1^{*}, \bar{P}_2^{*}, \bm{u}_2^{*})\RSB = \tau_1.
\label{eq:propos_1_rate}
\end{equation}
\end{proposition}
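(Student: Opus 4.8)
The plan is to argue by contradiction, exploiting the opposite monotonicities of the objective and of the constraint functional in the incumbent power level $\bar{P}_1$. The starting point is to observe how $\bar{P}_1$ enters the two averaged rates. In $\tilde{R}_1$ the level $\bar{P}_1$ multiplies the useful-signal term $|\bm{h}_{1,1}^{\He}\bm{u}_1|^2$ in the numerator, so for every realization of $\bm{h}_{1,1}$ the integrand is strictly increasing in $\bar{P}_1$; consequently $\E[\tilde{R}_1]$ is continuous and strictly increasing in $\bar{P}_1$. By contrast, in $\tilde{R}_2$ the same level $\bar{P}_1$ appears only through the interference term $\bar{P}_1\,\bm{u}_1^{\He}\mathbf{R}_{2,1}\bm{u}_1$ in the denominator, so, provided $\bm{u}_1^{\He}\mathbf{R}_{2,1}\bm{u}_1>0$ and $\bar{P}_2>0$, the integrand is strictly decreasing in $\bar{P}_1$ for almost every $\bm{h}_{2,2}$; hence $\E[\tilde{R}_2]$ is strictly decreasing in $\bar{P}_1$.

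Next I would dispose of the boundary cases needed for these strict monotonicities to be usable. Since $\tau_1>0$ and setting $\bar{P}_1=0$ makes $\tilde{R}_1\equiv 0$ (so $\E[\tilde{R}_1]=0<\tau_1$), any feasible---hence any optimal---point must have $\bar{P}_1^{*}>0$, leaving room to decrease it. Similarly $\bar{P}_2^{*}>0$, since $\bar{P}_2=0$ would yield zero objective while the feasibility condition~\eqref{eq:achievable_tau} guarantees that a strictly positive licensee rate is attainable; this secures the strict decrease of the objective in $\bar{P}_1$.

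With these facts in hand, the contradiction argument is short. Suppose an optimal tuple $(\bar{P}_1^{*},\bm{u}_1^{*},\bar{P}_2^{*},\bm{u}_2^{*})$ satisfied the constraint strictly, i.e. $\E[\tilde{R}_1(\bar{P}_1^{*},\bm{u}_1^{*},\bar{P}_2^{*},\bm{u}_2^{*})]=\tau_1+\delta$ for some $\delta>0$. By the continuity and strict monotonicity of $\bar{P}_1\mapsto\E[\tilde{R}_1]$ established above, and since $\bar{P}_1^{*}>0$, there exists $\bar{P}_1'\in(0,\bar{P}_1^{*})$ with $\E[\tilde{R}_1(\bar{P}_1',\bm{u}_1^{*},\bar{P}_2^{*},\bm{u}_2^{*})]\geq\tau_1$, so the perturbed tuple remains feasible. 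But $\bar{P}_1'<\bar{P}_1^{*}$ together with the strict decrease of the objective in $\bar{P}_1$ gives $\E[\tilde{R}_2(\bar{P}_1',\bm{u}_1^{*},\bar{P}_2^{*},\bm{u}_2^{*})]>\E[\tilde{R}_2(\bar{P}_1^{*},\bm{u}_1^{*},\bar{P}_2^{*},\bm{u}_2^{*})]$, contradicting optimality. Hence the constraint must hold with equality.

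The main obstacle I anticipate is not the contradiction itself but the careful handling of the degenerate configurations that would break the strict monotonicities---chiefly guaranteeing $\bm{u}_1^{\He}\mathbf{R}_{2,1}\bm{u}_1>0$ (immediate if the cross-link covariance is positive definite, as for non-degenerate correlated Rayleigh fading) and ruling out $\bar{P}_2^{*}=0$. Once these are settled, the interference-limited structure of the problem---the incumbent power being pure signal for $\textrm{RX}~1$ and pure interference for $\textrm{RX}~2$---makes the tightness of the constraint essentially forced.
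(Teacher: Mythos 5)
Your proof is correct and follows essentially the same route as the paper's: both exploit that the constraint $\E\LSB \tilde{R}_1 \RSB$ is continuous and increasing in $\bar{P}_1$ while the objective $\E\LSB \tilde{R}_2 \RSB$ is decreasing in it, so an optimal point (which must have $\bar{P}_1^{*}>0$ since $\tau_1>0$) with a slack constraint could be improved by reducing $\bar{P}_1$, a contradiction. Your extra care with the degenerate cases ($\bm{u}_1^{\He}\mathbf{R}_{2,1}\bm{u}_1>0$ and $\bar{P}_2^{*}>0$, needed for the objective to be \emph{strictly} decreasing so that the contradiction actually bites) addresses a point the paper's own proof passes over in silence.
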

\begin{proof}
 The objective $\E\LSB \tilde{R}_2(\bar{P}_1, \bm{u}_1, \bar{P}_2, \bm{u}_2)\RSB$ is monotonically decreasing with respect to $\bar{P}_1$, while, on the other hand, the constraint $\E\LSB \tilde{R}_1(\bar{P}_1, \bm{u}_1, \bar{P}_2, \bm{u}_2)\RSB$ is monotonically increasing and continuous in $\bar{P}_1$.~As a result, one can increase the objective by reducing power level $\bar{P}_1$ up to the point, where the average rate constraint of RX~$1$ will be satisfied with equality. This is always feasible because $\tau_1 > 0$ implies that $\bar{P}_1^{*} > 0$.
\end{proof}
The second proposition yields some insight with respect to the optimal power allocation scheme.
\begin{proposition}
\label{propos_2}
An optimal solution of problem~\eqref{eq:prob_new} satisfies that either TX~$1$ or TX~$2$ transmits with full power, i.e., when $\bar{P}_1^{*} = P_1^{\max}$ or $\bar{P}_2^{*} = P_2^{\max}$.
\end{proposition}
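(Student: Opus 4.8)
The plan is to argue by contradiction. Suppose an optimal solution $(\bar{P}_1^{*}, \bm{u}_1^{*}, \bar{P}_2^{*}, \bm{u}_2^{*})$ exists in which \emph{neither} transmitter uses full power, i.e.\ $\bar{P}_1^{*} < P_1^{\max}$ and $\bar{P}_2^{*} < P_2^{\max}$. I would then construct an explicit feasible perturbation that strictly increases the objective $\E[\tilde{R}_2]$, contradicting optimality. The perturbation I would use is a \emph{joint} rescaling of both power levels: fix the beamformers and set $\bar{P}_i^{\prime} = (1+\epsilon)\bar{P}_i^{*}$ for $i \in \{1,2\}$. Because both powers lie strictly below their caps, I can pick $\epsilon > 0$ small enough that $\bar{P}_i^{\prime} \leq P_i^{\max}$ for both $i$, so the perturbed point respects the power constraints and the unit-norm constraints are untouched.

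The heart of the argument is to show that this common rescaling strictly increases the objective rate while leaving the constraint rate non-decreasing. Writing the per-user SINR appearing in $\tilde{R}_i$ as $S_i/(N_0 + I_i)$, where $S_i = \bar{P}_i |\bm{h}_{i,i}^{\He}\bm{u}_i|^2$ is the (random) received signal term and $I_i = \bar{P}_{\bar{i}}\,\bm{u}_{\bar{i}}^{\He}\mathbf{R}_{i,\bar{i}}\bm{u}_{\bar{i}}$ is the (deterministic) interference term, the rescaling sends this ratio to $(1+\epsilon)S_i/(N_0 + (1+\epsilon)I_i)$. I would carry out the one-line sign computation showing that the difference between the scaled and unscaled SINR equals $\epsilon N_0 S_i$ divided by a positive quantity, hence is nonnegative, and strictly positive whenever $S_i > 0$. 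The conceptual point, which I would state explicitly, is that although both the useful signal and the interference scale together, the \emph{fixed} noise floor $N_0$ is effectively diluted, so every realization of the SINR improves. By monotonicity of $\log_2(1+\cdot)$ and of the expectation, this yields $\E[\tilde{R}_2'] > \E[\tilde{R}_2^{*}]$ (using that the licensee signal term is positive) and $\E[\tilde{R}_1'] \geq \E[\tilde{R}_1^{*}] = \tau_1$, so the constraint stays satisfied.

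I would then conclude: the perturbed point is feasible and strictly improves the objective, contradicting optimality of the original solution. Hence at least one of $\bar{P}_1^{*} = P_1^{\max}$ or $\bar{P}_2^{*} = P_2^{\max}$ must hold.

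I expect the main obstacle to be purely the pointwise SINR comparison under joint scaling: at first glance one might fear that scaling $\bar{P}_1$ up hurts the licensee through increased interference, but the computation shows the fixed $N_0$ always wins. A secondary, minor point I would address is the degenerate case $\bar{P}_2^{*} = 0$, for which the objective equals zero and the rescaling leaves it unchanged; this case is disposed of separately, since any feasible allocation with $\bar{P}_2 > 0$ (obtainable by raising $\bar{P}_2$ slightly and compensating with the available headroom in $\bar{P}_1$ to keep $\E[\tilde{R}_1] = \tau_1$, feasible because $\bar{P}_1^{*} < P_1^{\max}$) yields a strictly positive objective and thus already contradicts optimality. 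Once the sign computation is in place, the contradiction is immediate.
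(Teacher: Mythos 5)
Your proof is correct and takes essentially the same route as the paper's: the paper likewise applies a common scaling to both powers (writing $\bar{P}_i^{*} = \alpha_i^{*}\bar{P}$ and increasing $\bar{P}$), observes that both ergodic rates increase in the common scale factor because the noise term $N_0/\bar{P}$ is diluted, and derives the same contradiction with optimality. Your explicit pointwise sign computation and your separate treatment of the degenerate case $\bar{P}_2^{*} = 0$ (where the objective is zero and strict improvement needs the headroom argument) simply make rigorous two points the paper asserts without detail.
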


\begin{proof}
Considering an optimal solution, one can write $\bar{P}_1^{*} = \alpha_1^{*} \bar{P}$, for some $\alpha_1^{*} \geq 0$ and $\bar{P}_2^{*} = \alpha_2^{*} \bar{P}$, for some $\alpha_2^{*} \geq 0$, where $\bar{P} > 0$.~Then, taking every term of the objective and dividing the numerator and the denominator of its Signal to Interference plus Noise Ratio (SINR) by $\bar{P}$, one obtains
\begin{equation}
 \E\LSB \tilde{R}_2(\bm{w}_1^{*}, \bm{w}_2^{*})\RSB = \E\LSB \log_2\left(1 + \frac{\alpha_2^{*}{|\bm{h}_{2,2}^{\He} \bm{u}_2^{*}|}^2}{\frac{N_0}{\bar{P}} + \alpha_1^{*} {(\bm{u}_1^{*})}^{\He} \mathbf{R}_{2,1} \bm{u}_1^{*}}\right)\RSB,
\label{propos_2_RXs_k_rate}
\end{equation}
which is a monotonically increasing function of $\bar{P}$.~Similarly, the achievable average rate at ${\textrm{RX}}~1$ becomes
\begin{equation}
 \E\LSB \tilde{R}_1(\bm{w}_1^{*}, \bm{w}_2^{*})\RSB= \E\LSB \log_2\left(1 + \frac{\alpha_1^{*}{|\bm{h}_{1,1}^{\He} \bm{u}_1^{*}|}^2}{\frac{N_0}{\bar{P}} + \alpha_2^{*} {(\bm{u}_2^{*})}^{\He} \mathbf{R}_{1,2} \bm{u}_2^{*}}\right)\RSB,
\label{propos_2_RXp_rate}
\end{equation}
which is a monotonically increasing function of $\bar{P}$, as well.

If none of the two TXs transmits with full power, it means that it is possible to transmit with $\bar{P}^{\prime} > \bar{P}$.~Thus, the transmission using $(\alpha_1^{*} \bar{P}^{\prime}, \bm{u}_1^{*}, \alpha_2^{*} \bar{P}^{\prime}, \bm{u}_2^{*})$ is feasible and leads to a larger objective, which contradicts the optimality of $(\alpha_1^{*} \bar{P}, \bm{u}_1^{*}, \alpha_2^{*} \bar{P}, \bm{u}_2^{*})$.
\end{proof}

\section{Statistically Coordinated Precoding}\label{se:main}

We now present our main contribution which is a new transmission scheme constituting a possible solution for optimization problem \eqref{eq:prob_orig}. Indeed, it is important to note that, although possibly suboptimal, our approach is able to \emph{guarantee} the incumbent rate constraint and is, therefore, a solution to the initial optimization problem.

\subsection{General Approach}
%
Since the derivation of closed-form expressions for the optimal precoders is hardly tractable due to the functional nature of optimization problem \eqref{eq:prob_new} (which requires optimizing over an infinite dimensional space), we discretize the functional space and restrict the space of possible precoding solutions to a \emph{set of transmission strategies}, $\mathcal{S} \triangleq \{\mathcal{S}_1,\ldots,\mathcal{S}_L\}$, where $L$ denotes the number of different joint transmission schemes applied by ${\textrm{TX}}~1$ and ${\textrm{TX}}~2$. 

More specifically, each element, $\mathcal{S}_l \in \mathcal{S}$, refers to an ordered set $(\bar{P}_{1,\mathcal{S}_l}, \bm{u}_{1,\mathcal{S}_l}, \bar{P}_{2,\mathcal{S}_l}, \bm{u}_{2,\mathcal{S}_l}), \hspace{0.1in} l \in \{1,\ldots,L\}$. 

Such a restriction to a finite set of joint transmission schemes allows for every transmit strategy, $\mathcal{S}_l, \hspace{0.1in} l=1,\ldots,L$, to be evaluated in terms of feasibility and in terms of performance. Furthermore, it provides a low complexity method for coordinating the TXs.

\subsubsection{Beamforming Design}
The first step consists in designing the strategy set, i.e., the beamforming strategies. Although any beamforming scheme could be chosen in theory, a good heuristic choice is key to the tractability and the efficiency of the approach. In this work, we restrict our analysis to the MF and the sZF strategies, as they represent the extreme approaches between which it will be necessary to strike a trade-off.

MF precoding corresponds to the \emph{egoistic} beamforming scheme, where TX~$i$ transmits using 
\begin{equation}
\bm{u}_{i,{\textrm{MF}}} \triangleq \frac{\bm{h}_{i,i}}{\|\bm{h}_{i,i}\|}.
\end{equation}
This beamformer maximizes the strength of the direct link without any consideration of the interference.

In constrast, sZF corresponds to an \emph{altruistic} beamforming scheme, where TX~$i$ transmits using
\begin{equation}
\bm{u}_{i,{\textrm{sZF}}} = \arg \displaystyle \max_{\bm{u} \in \mathbb{C}^{M_i \times 1}} \bm{u}^{\He} \mathbf{R}_{\bar{i},i}^{-\frac{1}{2}} \mathbf{R}_{i,i} \mathbf{R}_{\bar{i},i}^{-\frac{1}{2}} \bm{u}.
\end{equation}
The sZF beamforming scheme consists in exploiting the statistical information of the cross-links to reduce the created interference, while also taking into consideration the statistical information of the direct links. This strategy has the advantage of using only statistical information available at both TXs and hence enforces perfect \emph{coordination} between the TXs, which will prove critical to an efficient joint transmission scheme.

\subsubsection{Power Control Policy}

Power control is a key ingredient to ensure that the average rate constraint for the incumbent RX is not violated. Furthermore, it is shown in Section~\ref{se:preliminaries} that the incumbent QoS constraint is always fulfilled with equality and that one of the two TXs emits with full power, while the other reduces its power to respect the incumbent constraint. Therefore, we denote by $\mathcal{P}_1$ the joint power policy where TX~$1$ emits with full power and by $\mathcal{P}_2$ the joint power policy where TX~$2$ transmits with full power.

\subsubsection{Choice of the Transmission Policy}
Considering the potential applicability of the two power control policies for each of the joint beamforming solutions, such a formulation leads to a joint transmission strategy set, which consists of $8$~possible transmission schemes. However, the incumbent constraint is only fulfilled for some of the strategies and has to be verified otherwise. It is, hence, necessary to compute for each of these $8$~transmission schemes the power emitted by one of the TXs and then evaluate the ergodic rate of both RXs. Once this is done, the best solution, in terms of average throughput for the licensee RX, is directly obtained.

\begin{remark}
It is critical to understand that both TXs will always agree on which strategy to use as the TXs are \emph{statistically coordinated}: only statistical information is necessary to evaluate the ergodic rates and choose the best strategy.\qed
\end{remark}
\subsection{Computation of the Ergodic Rates for each Strategy}
The ergodic rates for each of the $8$~strategies need to be evaluated. However, the expressions are practically the same in the sense that the $8$ possible strategies come from the combination of only a few parameters. We will hence only present in full detail two strategies: MF-MF-$\mathcal{P}_1$ and sZF-sZF-$\mathcal{P}_2$. The expressions for the other strategies can be trivially deduced.

\begin{remark}
The feasibility of a given strategy has to be verified. However, the feasibility of the optimization problem is preserved as the feasibility is guaranteed for strategy MF-MF-$\mathcal{P}_1$. Indeed, it contains the case where TX~$1$ transmits using MF and full power, while TX~$2$ does not transmit at all.\qed
\end{remark} 

\subsubsection{Strategy MF-MF-$\mathcal{P}_1$}

The TXs transmit using the beamforming vectors~$\bm{u}_{1,{\textrm{MF}}}$ and~$\bm{u}_{2,{\textrm{MF}}}$. Furthermore, TX~$1$ transmits using $\bar{P}_1=P_1^{\max}$. It, thus, remains to determine how TX~$2$ controls its power to ensure that the incumbent ergodic rate constraint is fulfilled, i.e., that 
\begin{equation} 
\E\LSB R_1\RSB \geq \tau_1.
\end{equation}
This can then be rewritten as
\begin{equation}
\begin{aligned}
&\E\LSB R_1\RSB\\
&\geq\mathbb{E}\LSB\log_2\left(1 + \frac{P^{\max}_{1} {|\bm{h}_{1,1}^{\He} \bm{u}_{1,\textrm{MF}}|}^2}{N_0 + \bar{P}_2 \bm{u}_{2,\textrm{MF}}^{\He} \mathbf{R}_{1,2} \bm{u}_{2,\textrm{MF}}}\right)\RSB \\
&= \E_{\bm{h}_{1,1},\bm{h}_{2,2}} \LSB\log_2\left(1 + \frac{P^{\max}_{1} {\|\bm{h}_{1,1}\|}^2}{N_0 + \bar{P}_2 \frac{\bm{h}_{2,2}^{\He} \mathbf{R}_{1,2}\bm{h}_{2,2}}{{\|\bm{h}_{2,2}\|}^2}}\right)\RSB \\
&\stackrel{\text{(a)}}\geq \E_{\bm{h}_{1,1}} \LSB\log_2\left(1 + \frac{P^{\max}_1 \|\bm{h}_{1,1}\|^2}{N_0 + \bar{P}_2 \E_{\bm{h}_{2,2}} \LSB\frac{\bm{h}_{2,2}^{\He} \bR_{1,2}\bm{h}_{2,2}}{{\|\bh_{2,2}\|}^2}\RSB}\right)\RSB \geq \tau_1,
\end{aligned} 
\label{MF_MF_crit_1}
\end{equation}
where $(a)$ holds by applying Jensen's inequality to convex function $\log_2\left(1 + \frac{1}{x}\right)$ and the expectation in the denominator can then be computed using Lemma~\ref{lemma_3} in the Appendix with $\bA=\bR_{2,2}$ and $\bB=\bR_{2,2}^{\frac{1}{2}}\bR_{1,2}\bR_{2,2}^{\frac{1}{2}}$.

Finally, a closed form expression for the ergodic rate is obtained with Lemma~\ref{lemma_1}. Hence, the value of $\bar{P}_2$ can be deduced by bisection, in order for the lower bound derived in \eqref{MF_MF_crit_1} to be equal to $\tau_1$.

It remains to evaluate the corresponding achievable average rate of RX~$2$. Following a similar approach as the one for the ergodic rate of the incumbent, we can obtain the following lower bound:
\begin{equation}
\begin{aligned}
&\E\LSB R_2\RSB\\
&\geq\mathbb{E}\LSB\log_2\left(1 + \frac{\bar{P}_2 {|\bm{h}_{2,2}^{\He} \bm{u}_{2,\textrm{MF}}|}^2}{N_0 + P_1^{\max} \bm{u}_{1,\textrm{MF}}^{\He} \mathbf{R}_{2,1} \bm{u}_{1,\textrm{MF}}}\right)\RSB \\
&= \E_{\bm{h}_{1,1},\bm{h}_{2,2}} \LSB\log_2\left(1 + \frac{\bar{P}_2 {\|\bm{h}_{2,2}\|}^2}{N_0 + P_1^{\max} \frac{\bm{h}_{1,1}^{\He} \mathbf{R}_{2,1}\bm{h}_{1,1}}{{\|\bm{h}_{1,1}\|}^2}}\right)\RSB \\
&\geq \E_{\bm{h}_{2,2}} \LSB\log_2\left(1 + \frac{\bar{P}_2 \|\bm{h}_{2,2}\|^2}{N_0 + P_1^{\max} \E_{\bh_{1,1}} \LSB\frac{\bh_{1,1}^{\He} \bR_{2,1}\bh_{1,1}}{{\|\bh_{1,1}\|}^2}\RSB}\right)\RSB.
\end{aligned} 
\label{MF_MF_2}
\end{equation}
 Once more, the expectation in the denominator is obtained using Lemma~\ref{lemma_3}, while a closed form expression for the ergodic rate is obtained with Lemma~\ref{lemma_1}.

\subsubsection{Strategy sZF-sZF-$\mathcal{P}_2$}
In this strategy, the TXs transmit using the beamformers~$\bm{u}_{1,{\textrm{sZF}}}$ and~$\bm{u}_{2,{\textrm{sZF}}}$, while TX~$2$ transmits using $\bar{P}_2=P_2^{\max}$. It remains then to determine $\bar{P}_1$. In that setting, the rate of $\textrm{RX}~1$ can be lower bounded as
\begin{equation}
\begin{aligned}
\E\LSB \tilde{R}_1\RSB&=\E_{\bm{h}_{1,1}} \LSB\log_2\left(1 + \frac{\bar{P}_1 |\bm{h}_{1,1}^{\He} \bm{u}_{1,\textrm{sZF}}|^2}{N_0 + P_2^{\max} \bm{u}_{2,\textrm{sZF}}^{\He} \mathbf{R}_{1,2} \bm{u}_{2,\textrm{sZF}}}\right)\RSB\geq \tau_1.
\end{aligned}
\label{SZF_SZF_coord_1}
\end{equation}
This rate can be directly computed in closed form using Lemma~\ref{lemma_2}. Finally, the power $\bar{P}_1$, such that the ergodic rate constraint for $\textrm{RX}~1$ is met by the derived lower bound, can be obtained by bisection.

It now remains to evaluate the corresponding ergodic rate of the licensee RX. This is given by the following expression
\begin{equation}
\begin{aligned}
\E\LSB \tilde{R}_2\RSB&=\E_{\bm{h}_{2,2}} \LSB\log_2\left(1 + \frac{P_2^{\max} |\bm{h}_{2,2}^{\He} \bm{u}_{2,\textrm{sZF}}|^2}{N_0 + \bar{P}_1 \bm{u}_{1,\textrm{sZF}}^{\He} \mathbf{R}_{2,1} \bm{u}_{1,\textrm{sZF}}}\right)\RSB.
\end{aligned}
\label{SZF_SZF_coord_2}
\end{equation}
The latter expression can be computed in closed form by applying Lemma \ref{lemma_2}.

\section{Reference Precoding Schemes}
In this section, we present two schemes which will be used to evaluate the efficiency of our statistically coordinated precoding approach. 

The first one, denoted as ``interference temperature-based'' precoding, is an adaptation of the approaches in the literature to allow for a fair comparison. Intuitively, it corresponds to the conventional underlay CR paradigm, where solely the secondary TX adapts its strategy in order for the interference received by the primary RX to be below a given threshold \cite{Biglieri2012}. 

The second one constitutes a coordination benchmark and it is a priori not reachable, but allows to bound the sub-optimality of the proposed approach.

\subsection{Interference Temperature-Based Precoding}
The interference temperature approach, extensively used in the CR literature, consists in forcing the secondary TX to create less interference to the primary user than a given interference threshold, which is here denoted by~$\mathcal{I}$. 

Considering that the secondary TX aims at minimizing the interference created and transmits using $\bm{u}_{2,{\textrm{sZF}}}$, the power emitted by the secondary TX is then given by
\begin{equation}
 \bar{P}_2 = \min\left\{\frac{\mathcal{I}}{\bm{u}_{2,{\textrm{sZF}}}^{\He} \mathbf{R}_{1,2} \bm{u}_{2,{\textrm{sZF}}}}, P^{\max}_2\right\} .
\label{int_temp_sec_power_total}
\end{equation}
In order to conduct a fair comparison with the designed statistically coordinated precoding scheme, we need to determine the interference temperature, $\mathcal{I}$, such that the ergodic rate constraint of $\textrm{RX}~1$ is met with equality, i.e.,
\begin{equation}
  \mathbb{E}\LSB\log_2\bigg(1 + \frac{P^{\max}_1 {\|\bm{h}_{1,1}\|}^2}{N_0 + \mathcal{I}}\bigg)\RSB = \tau_1.
\label{int_temp_level}
\end{equation}
The expectation appearing in \eqref{int_temp_level} can be computed by applying Lemma~\ref{lemma_1}. The interference temperature threshold,~$\mathcal{I}$, can be then easily found by bisection.

\subsection{Coordination Benchmark}
When designing the beamformers, we can observe a clear trade-off between maximizing the desired signal (using MF) and minimizing the interference created. Hence, if we assume that one can achieve both goals at the same time, the following optimization problem is obtained for the power control, and leads to an, a priori, infeasible performance upperbound.
\begin{equation}
\begin{aligned}
 &\max_{\bar{P}_1,\bar{P}_2} \E \LSB\log_2\bigg(1 + \frac{\bar{P}_2 \|\bm{h}_{2,2}\|^2}{N_0 + \bar{P}_1\lambda_{\min}\LB\mathbf{R}_{2,1}\RB }\bigg)\RSB \\
 &{\textrm{subject}} \hspace{0.1in} {\textrm{to}}\;\;\;\;\E \LSB\log_2\bigg(1 + \frac{\bar{P}_1 \|\bm{h}_{1,1}\|^2}{N_0 + \bar{P}_2 \lambda_{\min}\LB\mathbf{R}_{1,2}\RB}\bigg)\RSB \geq \tau_1, \\
 & \;\;\;\;\;\;\;\;\;\;\;\;\;\;\;\;\;\;\;\;0 \leq \bar{P}_1 \leq P_1^{\max}, \hspace{0.1in} 0 \leq \bar{P}_2 \leq P_2^{\max}.
\end{aligned}
\tag{P3}
\label{ub_problem}
\end{equation}
The ergodic rate expressions appearing in \eqref{ub_problem} can be computed in closed form by applying Lemma~\ref{lemma_1}. The optimal slow power control values are obtained by exploiting Proposition~\ref{propos_2}. Indeed, one of the two TXs transmits with full power, while the other one controls its power by bisection. Comparing the performance and the feasibility of both solutions leads to the solution of optimization problem \eqref{ub_problem}.
\section{Numerical Evaluation}

With the aim of evaluating the performance of the proposed statistically coordinated precoding scheme, extensive Monte Carlo simulations have been performed and, more specifically, $20000$ channel realizations have been simulated. We assume the existence of $M_1=M_2=M=4$ antennas at each TX. Furthermore, we consider unit noise variance ($N_0=1$) and a QoS threshold~$\tau_1 = 1$~bps/Hz.

We consider a classical exponential channel correlation model \cite{Loyka2001}, in which the covariance matrices~$\mathbf{R}_{i,j}$ are given by
\begin{equation}
 \mathbf{R}_{i,j} = \beta_{i,j} \begin{bmatrix}
1&\rho&\rho^2&\ldots&\rho^{M-1}\\
\rho&1&\rho&\ldots&\rho^{M-2}\\
\vdots&\vdots&\vdots&\ddots&\vdots\\
\rho^{M-1}&\rho^{M-2}&\rho^{M-3}&\ldots&1\\
\end{bmatrix}, \hspace{0.1in} i,j \in \{1,2\},
\label{covar_struct}
\end{equation}
where, $\beta_{i,j}$ represents the pathloss and is chosen here equal to $1$ when $i=j$ and to $0.3$ otherwise. In the investigated scenario the antenna correlation factor, $\rho$, is equal to 0.5.

\begin{figure}[!ht]
  \centering
  \includegraphics[scale=0.70]{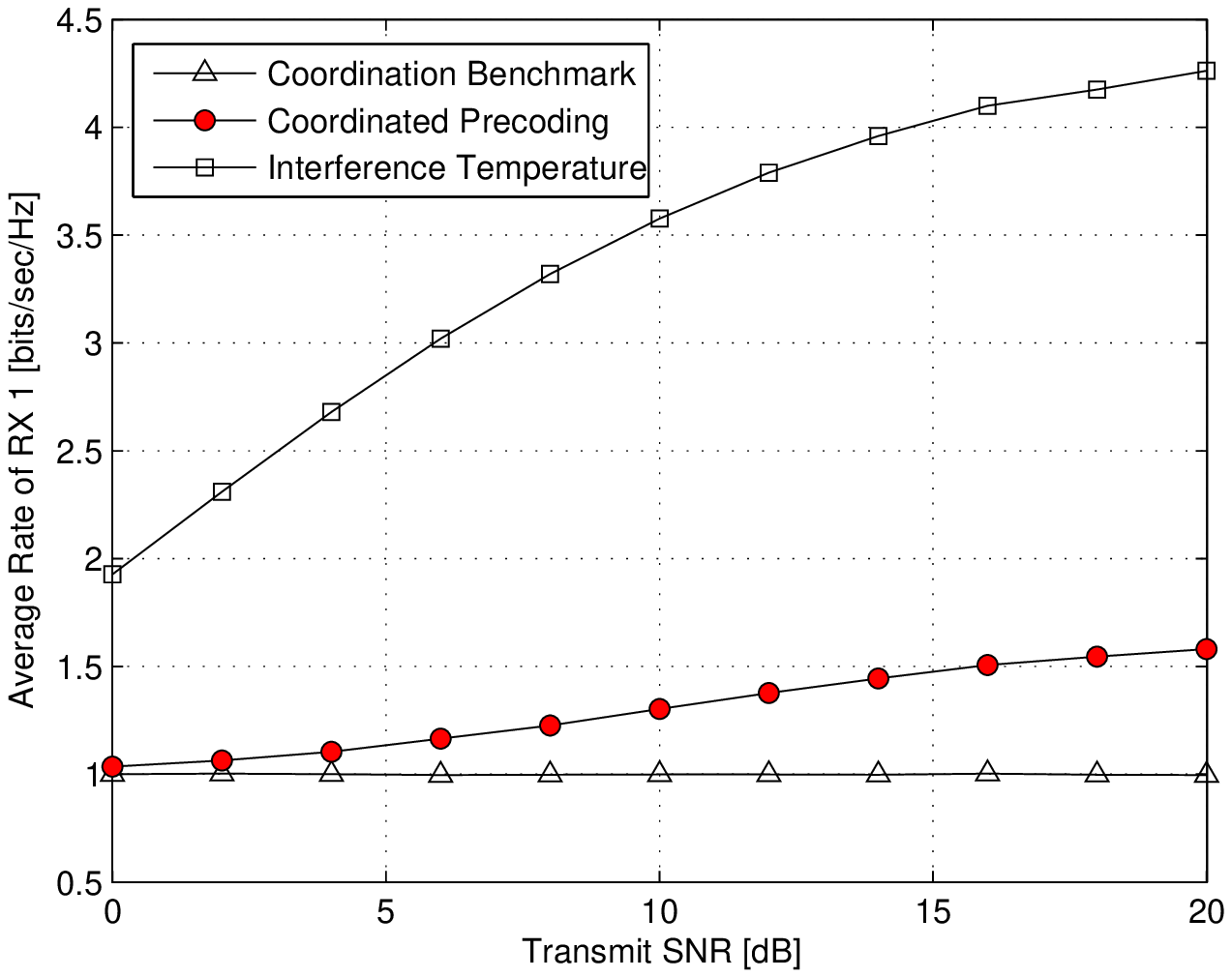}\\
	\vspace{-0.2in}
  \caption{Ergodic rate of ${\textrm{RX}}~1$ vs. transmit SNR, when incumbent QoS threshold $\tau_1=1$bps/Hz.} \label{fig:Erg_Rate_RX1_vs_SNR}
\end{figure}

\begin{figure}[!ht]
  \centering
  \includegraphics[scale=0.70]{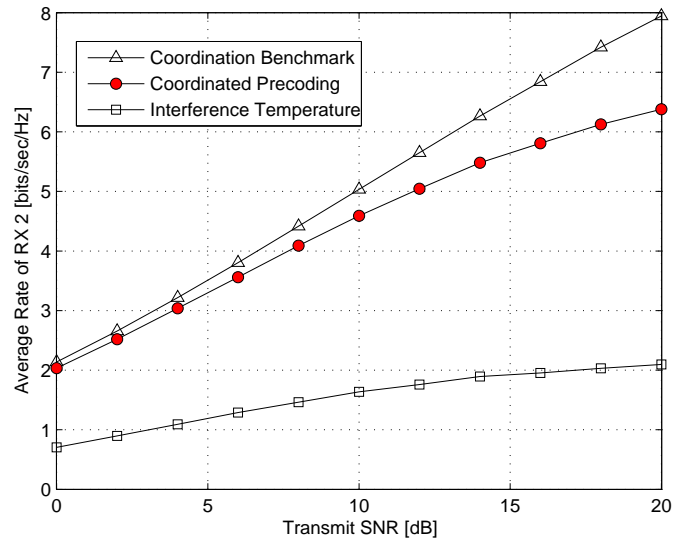}\\
	\vspace{-0.2in}
  \caption{Ergodic rate of ${\textrm{RX}}~2$ vs. transmit SNR, when incumbent QoS threshold $\tau_1=1$bps/Hz.}
\label{fig:Erg_Rate_RX2_vs_SNR}
\end{figure}

In Fig.~\ref{fig:Erg_Rate_RX1_vs_SNR} and Fig.~\ref{fig:Erg_Rate_RX2_vs_SNR}, the average rate of RX~$1$ and the average rate of ${\textrm{RX}}~2$ are depicted as a function of the system's transmit Signal-to-Noise Ratio (SNR). The three curves represent the throughput performance achieved by the proposed statistically coordinated precoding scheme, the interference temperature-based precoding scheme, as well as the described coordination benchmark. Focusing on RX~$2$, the coordination benchmark outperforms both the proposed precoding scheme as well as the interference temperature-based scheme, as expected. By observing Fig.~\ref{fig:Erg_Rate_RX1_vs_SNR}, it should be noted that, in contrast with the coordination benchmark, the proposed precoding scheme fails to satisfy the incumbent average rate constraint with equality. This occurs because we resort to tackling optimization problem \eqref{eq:prob_new}, which involves a lower bound of the average rate of RX~$1$. Nevertheless, the proposed algorithm successfully manages to control the average rate of RX~$1$ and this capability can be translated to a significant throughput gain for the licensee, in comparison to the one achieved by the interference temperature-based precoding scheme.

\begin{figure}[!ht]
  \centering
  \includegraphics[scale=0.70]{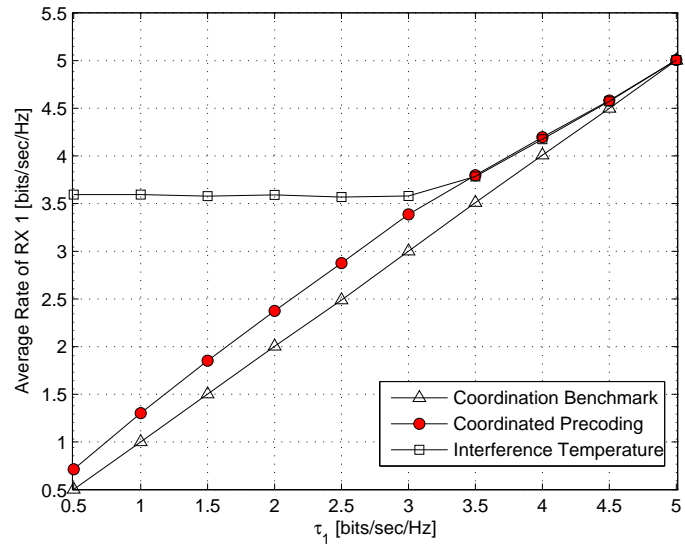}\\
	\vspace{-0.2in}
  \caption{Ergodic rate of ${\textrm{RX}}~1$ vs. threshold $\tau_1$, SNR=$10$dB.} 
\label{fig:Erg_Rate_RX1_vs_tau_1}
\end{figure}

\begin{figure}[!ht]
\centering
\includegraphics[scale=0.70]{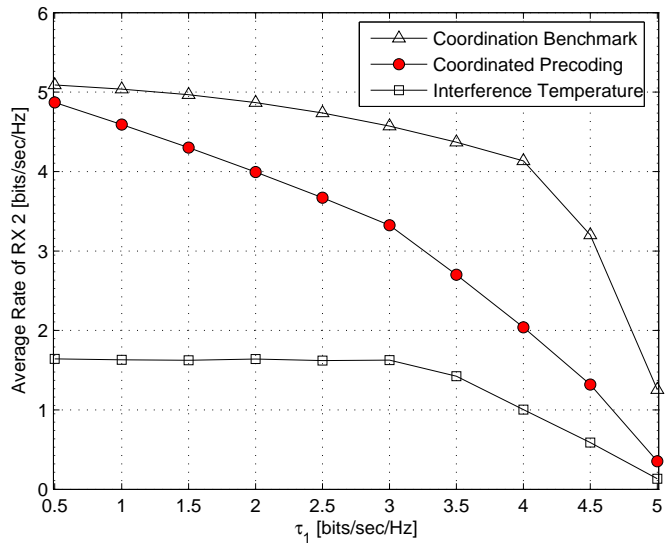}\\
\vspace{-0.2in}
\caption{Ergodic rate of ${\textrm{RX}}~2$ vs. threshold $\tau_1$, SNR=$10$dB.}
\label{fig:Erg_Rate_RX2_vs_tau_1}
\end{figure}

The achievable average rates of RX~$1$ and RX~$2$, by applying the proposed precoding algorithm, along with the ones achieved by the two reference precoding schemes, are depicted in Fig.~\ref{fig:Erg_Rate_RX1_vs_tau_1} and Fig.~\ref{fig:Erg_Rate_RX2_vs_tau_1}, respectively, as a function of QoS threshold, $\tau_1$, when the transmit SNR of the system is equal to $10$~dB. The average rate constraint for RX~$1$ is fulfilled by all three schemes for the whole examined range of $\tau_1$. Also, the proposed precoding scheme outperforms the interference temperature-based one, especially when the average rate constraint of the incumbent is loose, which occurs due to the fact that under this regime there is more to gain for the licensee by means of an efficient coordination.
\section{Conclusions}

In this work, we propose a novel, joint precoding scheme with reference to a shared spectrum access system where the two TXs coordinate on the basis of statistical knowledge of the global multi-user channel. Our approach consists in formulating a Team Decision problem, the solution of which is approached by reducing the transmission strategy space to a finite number of strategies. This method is key to enforcing between the TXs and obtaining a practical solution to the intricate Team Decision problem. Such an approach allows to improve over the conventional underlay CR approach, by enforcing more coordination between the two coexisting TXs at the price of low CSI and complexity requirements, as the coordination can be realized offline. Approaching the global optimum is both a difficult and challenging problem that can be tackled in the future. The proposed scheme has also a strong potential in other more complex scenarios with multiple incumbent and/or licensee networks.

\appendix
%
\begin{lemma}\cite[eq. (37)]{Mallik2004}\label{lemma_1}
Let $\bar{\gamma} \in \mathbb{R}^{+}$ and $\bm{h} \sim \mathcal{CN}(\bm{0}_n, \mathbf{R}_{\bm{h}})$, where covariance matrix $\mathbf{R}_{\bm{h}}$ has $n$ distinct eigenvalues ${\{\lambda_j\}}_{j=1}^n$. It then holds  
\begin{equation}
\E_{\bm{h}}\LSB \log_2(1 + \bar{\gamma} {\|\bm{h}\|}^2)\RSB = \frac{1}{\ln(2) \bar{\gamma} \prod_{j=1}^n \lambda_j} \sum_{j=1}^n \frac{\bar{\gamma} \lambda_j e^{\frac{1}{\bar{\gamma} \lambda_j}} E_1\big(\frac{1}{\bar{\gamma} \lambda_j}\big)}{\prod_{m=1, m \neq j}^n \big(\frac{1}{\lambda_m} - \frac{1}{\lambda_j}\big)}.
\end{equation}
\end{lemma}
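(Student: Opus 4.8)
The plan is to turn the expectation into an integral against the probability density of $\|\bm{h}\|^2$, which is a sum of independent exponential variables with distinct means, and then to collapse the whole computation onto a single standard integral involving the exponential integral $E_1$.

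First I would diagonalize the covariance. Writing $\mathbf{R}_{\bm{h}} = \mathbf{U}\,\diag(\lambda_1,\ldots,\lambda_n)\,\mathbf{U}^{\He}$ with $\mathbf{U}$ unitary and setting $\bm{g} = \mathbf{U}^{\He}\bm{h}$, unitary invariance of the norm gives $\|\bm{h}\|^2 = \sum_{j=1}^n |g_j|^2$, where the $g_j$ are independent with $g_j \sim \mathcal{CN}(0,\lambda_j)$. Hence $X \triangleq \|\bm{h}\|^2$ is a sum of independent exponential random variables, the $j$-th having mean $\lambda_j$. Because the $\lambda_j$ are distinct, I can recover the density of $X$ by a partial-fraction decomposition of its Laplace transform $\prod_{j=1}^n (1 + \lambda_j s)^{-1}$. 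The simple poles produce
\begin{equation}
f_X(x) = \sum_{j=1}^n \frac{c_j}{\lambda_j}\, e^{-x/\lambda_j}, \qquad c_j = \prod_{m \neq j} \frac{\lambda_j}{\lambda_j - \lambda_m}, \qquad x > 0,
\end{equation}
where the distinctness hypothesis is exactly what guarantees the $c_j$ are well defined.

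It then remains to integrate $\log_2(1 + \bar{\gamma} x)$ against this density term by term. The single analytic ingredient needed is the identity
\begin{equation}
\int_0^\infty e^{-\mu x} \ln(1 + \bar{\gamma} x)\intd x = \frac{1}{\mu}\, e^{\mu/\bar{\gamma}}\, E_1\!\left(\frac{\mu}{\bar{\gamma}}\right), \qquad \mu, \bar{\gamma} > 0,
\end{equation}
which I would establish by integrating by parts — the boundary terms vanish since $\ln(1+\bar{\gamma}x)$ vanishes at $0$ and is dominated by the exponential at infinity — thereby reducing the remaining integral $\int_0^\infty e^{-\mu x}/(1+\bar{\gamma}x)\intd x$ to the defining integral of $E_1$ via the substitution $t = 1 + \bar{\gamma} x$. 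Applying this with $\mu = 1/\lambda_j$ turns each term into $c_j\, e^{1/(\bar{\gamma}\lambda_j)}\, E_1(1/(\bar{\gamma}\lambda_j))$, and a direct algebraic rearrangement — using $\prod_{m \neq j}(1/\lambda_m - 1/\lambda_j) = \prod_{m \neq j}(\lambda_j - \lambda_m)/(\lambda_j^{\,n-2}\prod_k \lambda_k)$ — rewrites $c_j$ into the coefficient displayed in the statement, after the overall factor $1/\ln 2$ converts $\log_2$ to $\ln$.

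The only genuine obstacle is the partial-fraction step, which hinges essentially on the distinctness of the eigenvalues: for repeated eigenvalues the Laplace transform would have multiple poles and the closed form would acquire derivative terms. The integral identity and the final bookkeeping are routine, and the resulting expression coincides with the form cited from \cite{Mallik2004}.
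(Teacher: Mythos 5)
Your proof is correct. Note that the paper itself contains no proof of this lemma: it is imported directly from the cited reference \cite[eq. (37)]{Mallik2004}, so your derivation is a self-contained reconstruction rather than a departure from an internal argument. All three of your ingredients check out: (i) unitary invariance reduces $\|\bm{h}\|^2$ to a hypoexponential random variable, a sum of independent exponentials with means $\lambda_1,\ldots,\lambda_n$; (ii) the partial-fraction expansion of the Laplace transform $\prod_{j=1}^n(1+\lambda_j s)^{-1}$ has only simple poles precisely because the eigenvalues are distinct, yielding the mixture density with weights $c_j=\prod_{m\neq j}\lambda_j/(\lambda_j-\lambda_m)$; and (iii) the identity $\int_0^\infty e^{-\mu x}\ln(1+\bar{\gamma}x)\,dx=\mu^{-1}e^{\mu/\bar{\gamma}}E_1(\mu/\bar{\gamma})$ follows exactly as you say, by parts (the boundary terms vanish) and the substitution $t=1+\bar{\gamma}x$, which lands on the defining integral of $E_1$. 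The final bookkeeping is also right: using $\prod_{m\neq j}(1/\lambda_m-1/\lambda_j)=\prod_{m\neq j}(\lambda_j-\lambda_m)/\big(\lambda_j^{\,n-2}\prod_{k}\lambda_k\big)$, the coefficient multiplying $e^{1/(\bar{\gamma}\lambda_j)}E_1\big(1/(\bar{\gamma}\lambda_j)\big)$ in the lemma's display simplifies to $\lambda_j^{\,n-1}/\big(\ln(2)\prod_{m\neq j}(\lambda_j-\lambda_m)\big)=c_j/\ln(2)$, which is exactly what your term-by-term integration produces. Your argument is the standard transform-domain route to such ergodic-capacity expressions and is consistent with the form quoted from the reference; the one hypothesis it genuinely uses, as you correctly flag, is the distinctness of the eigenvalues, without which the poles are no longer simple and derivative terms would appear.
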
 

\begin{lemma}\cite[eq. (75)-(76)]{Wang2012}\label{lemma_2}
Let $\bar{\gamma} \in \mathbb{R}^{+}$ and $\bm{w} \in \mathbb{C}^{n \times 1}$ be deterministic, and $\bm{h} \sim \mathcal{CN}(\bm{0}_n, \mathbf{R}_{\bm{h}})$. It then holds
\begin{equation}
\mathbb{E}_{\bm{h}}\LSB\log_2\left(1 + \bar{\gamma}{|\bm{h}^{\He} \bm{w}|}^2\right)\RSB= \frac{1}{\ln(2)} e^{\frac{1}{\bar{\gamma} \lambda_1(\mathbf{R}_{{\textrm{eff}}})}} E_1\bigg(\frac{1}{\bar{\gamma} \lambda_1(\mathbf{R}_{{\textrm{eff}}})}\bigg),
\end{equation}
where $\lambda_1(\mathbf{R}_{{\textrm{eff}}})$ is the unique non-zero (positive) eigenvalue of matrix $\mathbf{R}_{{\textrm{eff}}} = \mathbf{R}_{\bm{h}}^{\frac{1}{2}} \bm{w} \bm{w}^{\He} \mathbf{R}_{\bm{h}}^{\frac{1}{2}}$.
\end{lemma}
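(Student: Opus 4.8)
The plan is to reduce the whole computation to the expectation of $\log_2(1+\bar\gamma X)$ for a single, scalar, exponentially distributed random variable $X$. First I would \emph{whiten} the channel by writing $\bh = \mathbf{R}_{\bh}^{1/2}\bg$ with $\bg \sim \mathcal{CN}(\bm{0}_n,\mathbf{I}_n)$; since the positive-semidefinite square root $\mathbf{R}_{\bh}^{1/2}$ is Hermitian, this gives $\bh^{\He}\bw = \bg^{\He}\bv$ where $\bv \triangleq \mathbf{R}_{\bh}^{1/2}\bw$. The inner product $\bg^{\He}\bv$ is a \emph{scalar} circularly symmetric complex Gaussian of zero mean and variance $\bv^{\He}\bv = \bw^{\He}\mathbf{R}_{\bh}\bw$, so that $X \triangleq |\bh^{\He}\bw|^2$ is exponentially distributed with mean $\|\bv\|^2$.

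Next I would identify this mean with the eigenvalue appearing in the statement. Observe that $\mathbf{R}_{{\textrm{eff}}} = \mathbf{R}_{\bh}^{1/2}\bw\bw^{\He}\mathbf{R}_{\bh}^{1/2} = \bv\bv^{\He}$ is a rank-one matrix whose single nonzero eigenvalue equals $\bv^{\He}\bv = \|\bv\|^2$. Hence $\lambda_1(\mathbf{R}_{{\textrm{eff}}})$ is precisely the mean of $X$, and $X$ has density $f_X(x) = \frac{1}{\lambda_1}e^{-x/\lambda_1}$ on $[0,\infty)$, writing $\lambda_1 \triangleq \lambda_1(\mathbf{R}_{{\textrm{eff}}})$ for brevity.

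It then remains to evaluate the scalar integral $\E\LSB\log_2(1+\bar\gamma X)\RSB = \frac{1}{\ln 2}\int_0^\infty \ln(1+\bar\gamma x)\frac{1}{\lambda_1}e^{-x/\lambda_1}\intd x$. I would integrate by parts, differentiating the logarithm and integrating the exponential; the boundary terms vanish (at infinity because $e^{-x/\lambda_1}$ decays faster than $\ln(1+\bar\gamma x)$ grows, and at the origin because $\ln 1 = 0$), leaving $\frac{\bar\gamma}{\ln 2}\int_0^\infty \frac{e^{-x/\lambda_1}}{1+\bar\gamma x}\intd x$. The change of variables $t = \frac{1}{\lambda_1}\big(x + \frac{1}{\bar\gamma}\big)$ sends $1+\bar\gamma x$ to $\bar\gamma\lambda_1 t$ and pulls out an $e^{1/(\bar\gamma\lambda_1)}$ factor, recasting the integral as $\frac{e^{1/(\bar\gamma\lambda_1)}}{\bar\gamma}\int_{1/(\bar\gamma\lambda_1)}^\infty \frac{e^{-t}}{t}\intd t$, which is by definition $\frac{e^{1/(\bar\gamma\lambda_1)}}{\bar\gamma} E_1\big(\frac{1}{\bar\gamma\lambda_1}\big)$. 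Collecting the constants yields the claimed closed form.

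As for difficulty, none of the steps is genuinely hard: the whitening and the rank-one eigenvalue identification are routine, and the final integral is a classical reduction to the exponential integral. The only point requiring mild care is tracking the constants through the change of variables so that the argument of $E_1$ comes out as $\frac{1}{\bar\gamma\lambda_1(\mathbf{R}_{{\textrm{eff}}})}$ and not its reciprocal, and verifying that $\bv \neq \bm{0}_n$ so that $\lambda_1(\mathbf{R}_{{\textrm{eff}}})$ is well defined (otherwise both sides degenerate to $0$, consistently with $E_1(\infty)=0$).
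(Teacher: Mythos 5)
Your proof is correct. Note that the paper itself offers no proof of this lemma at all: it is imported verbatim from the cited reference \cite[eqs. (75)--(76)]{Wang2012}, so there is no internal argument to compare against, and your derivation serves as a self-contained replacement for that citation. The route you take is the standard one for this closed form, and every step checks out: whitening gives $\bm{h}^{\He}\bm{w} = \bm{g}^{\He}\bm{v}$ with $\bm{v}=\mathbf{R}_{\bm{h}}^{1/2}\bm{w}$, so that $|\bm{h}^{\He}\bm{w}|^2$ is exponential with mean $\|\bm{v}\|^2$; the rank-one identity $\mathbf{R}_{\mathrm{eff}}=\bm{v}\bm{v}^{\He}$ correctly identifies that mean with $\lambda_1(\mathbf{R}_{\mathrm{eff}})$; and the integration by parts followed by the shift $t=\frac{1}{\lambda_1}\left(x+\frac{1}{\bar{\gamma}}\right)$ lands exactly on $\frac{1}{\ln 2}\,e^{1/(\bar{\gamma}\lambda_1)}E_1\!\left(\frac{1}{\bar{\gamma}\lambda_1}\right)$ with the argument of $E_1$ the right way up. You also flag the two genuinely delicate points — the vanishing boundary terms and the degenerate case $\mathbf{R}_{\bm{h}}^{1/2}\bm{w}=\bm{0}_n$, where the lemma's premise of a nonzero eigenvalue fails and both sides collapse to zero — so nothing is missing.
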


\begin{lemma}\label{lemma_3}
Let us consider two positive semi-definite matrices $\mathbf{A}$ and $\mathbf{B}$ in $\mathbb{C}^{n\times n}$ with eigenvalues denoted as~$\lambda_1(\mathbf{A}),\ldots,\lambda_n(\mathbf{A})$ and~$\lambda_1(\mathbf{B}),\ldots,\lambda_n(\mathbf{B})$, respectively, where it is assumed that $\mathbf{A}$ is of full rank and has no multiple eigenvalues. We also assume that matrix $\mathbf{A}$ can be decomposed as $\mathbf{A} = \mathbf{U}_{\mathbf{A}} \mathbf{\Lambda}_{\mathbf{A}} \mathbf{U}_{\mathbf{A}}^{\He}$, where $\mathbf{U}_{\mathbf{A}}$ is a unitary matrix and $\mathbf{\Lambda}_{\mathbf{A}} = \diag\left(\lambda_1(\mathbf{A}),\ldots,\lambda_n(\mathbf{A})\right)$, where it holds that $0 < \lambda_1(\mathbf{A}) <\ldots< \lambda_n(\mathbf{A})$. Let $\bm{x} \in \mathbb{C}^{n \times n}$ be a standard complex Gaussian random vector, such that $\bm{x} \sim \mathcal{CN}(\mathbf{0}_n, \mathbf{I}_n)$. It then holds
\begin{equation}\label{closed_form_expec}
\begin{aligned}
 \mathbb{E}\LSB\frac{\bm{x}^{\He} \mathbf{B} \bm{x}}{\bm{x}^{\He} \mathbf{A} \bm{x}}\RSB &= \sum_{i=1}^n {[\tilde{\mathbf{B}}]}_{i,i} \left\{\frac{\lambda_i(\mathbf{A})^{n-2} \left((n-1) \left(\ln(\lambda_i(\mathbf{A})) - \gamma\right) + 1\right)}{\prod_{j \neq i} \left(\lambda_i(\mathbf{A}) - \lambda_j(\mathbf{A})\right)} \right. \\
&\qquad \left. {} - \frac{\lambda_i(\mathbf{A})^{n-1} \left(\ln(\lambda_i(\mathbf{A})) - \gamma\right) \sum_{r=1,r \neq i}^n \prod_{j \neq i,r} \left(\lambda_i(\mathbf{A}) - \lambda_j(\mathbf{A})\right)}{{\left(\prod_{j \neq i} \left(\lambda_i(\mathbf{A}) - \lambda_j(\mathbf{A})\right)\right)}^2} \right. \\ 
& \qquad \left. {} + \sum_{k=1,k \neq i}^n \frac{\lambda_k(\mathbf{A})^{n-1} \left(\ln(\lambda_k(\mathbf{A})) - \gamma\right) \prod_{j \neq k,i} \left(\lambda_k(\mathbf{A}) - \lambda_j(\mathbf{A})\right)}{{\left(\prod_{j \neq k} \left(\lambda_k(\mathbf{A}) - \lambda_j(\mathbf{A})\right)\right)}^2}\right\},
\end{aligned}
\end{equation}
where $\tilde{\mathbf{B}} = \mathbf{U}_{\mathbf{A}}^{\He} \mathbf{B} \mathbf{U}_{\mathbf{A}}$.
\begin{proof}
We prove this result in two steps. Firstly, we show that considering two matrices $\bA$ and $\bB$ with different eigenbases, we can come back to the case of matrices having the same eigenbasis. We then prove the lemma for this case.
 
Let us assume that $\mathbf{A}$ and $\mathbf{B}$ have \emph{different} eigenbases. We consider their eigendecompositions $\mathbf{A} = \mathbf{U}_{\mathbf{A}} \mathbf{\Lambda}_{\mathbf{A}} \mathbf{U}_{\mathbf{A}}^{\He}$ and $\mathbf{B} = \mathbf{U}_{\mathbf{B}} \mathbf{\Lambda}_{\mathbf{B}} \mathbf{U}_{\mathbf{B}}^{\He}$, where the diagonal entries of $\mathbf{\Lambda}_{\mathbf{A}}$ are sorted in an increasing order and the diagonal entries of $\mathbf{\Lambda}_{\mathbf{B}}$ are non-decreasingly sorted. Introducing matrix $\tilde{\mathbf{B}} = \mathbf{U}_{\mathbf{A}}^{\He} \mathbf{B} \mathbf{U}_{\mathbf{A}}$, the expectation in question becomes 
\begin{equation}
 \mathbb{E}\LSB\frac{\bm{x}^{\He} \mathbf{B} \bm{x}}{\bm{x}^{\He} \mathbf{A} \bm{x}}\RSB = \mathbb{E}\LSB\frac{\bm{x}^{\He} \mathbf{U}_{\mathbf{A}}^{\He} \mathbf{B} \mathbf{U}_{\mathbf{A}} \bm{x}}{\bm{x}^{\He} \mathbf{\Lambda}_{\mathbf{A}} \bm{x}}\RSB = \mathbb{E}\LSB\frac{\bm{x}^{\He} \tilde{\mathbf{B}} \bm{x}}{\bm{x}^{\He} \mathbf{\Lambda}_{\mathbf{A}} \bm{x}}\RSB = \mathbb{E}\LSB\frac{\sum_{i=1}^n \sum_{j=1}^n {[\tilde{B}]}_{i,j} x_i^{*} x_j}{\sum_{k=1}^n \lambda_k(\mathbf{A}) |x_k|^2}\RSB.
\label{diff_eigenbase_1}
\end{equation}
Exploiting the fact that $\bm{x} \sim \mathcal{CN}(\bm{0}_n, \mathbf{I}_n)$, if we write each $x_i, \hspace{0.1in} i=1,\ldots,n$ in polar representation, i.e., $x_i = |x_i| e^{j \phi_i}$, then we have that all phases $\phi_i, \hspace{0.1in} i=1,\ldots,n$ and amplitudes $|x_i|, \hspace{0.1in} i=1,\ldots,n$ are mutually independent and the phases are uniformly distributed. As a result, the expectation takes the following form
\begin{equation}
 \mathbb{E}\LSB\frac{\bm{x}^{\He} \mathbf{B} \bm{x}}{\bm{x}^{\He} \mathbf{A} \bm{x}}\RSB = \mathbb{E}\LSB\frac{\sum_{i=1}^n \sum_{j=1}^n {[\tilde{B}]}_{i,j} |x_i| |x_j| e^{j(\phi_j - \phi_i)}}{\sum_{k=1}^n \lambda_k(\mathbf{A}) |x_k|^2}\RSB = \mathbb{E}\LSB\frac{\sum_{i=1}^n {[\tilde{B}]}_{i,i} |x_i|^2}{\sum_{k=1}^n \lambda_k(\mathbf{A}) |x_k|^2}\RSB,
\label{diff_eigenbase_2}
\end{equation}
or, equivalently
\begin{equation}
 \mathbb{E}\LSB\frac{\bm{x}^{\He} \mathbf{B} \bm{x}}{\bm{x}^{\He} \mathbf{A} \bm{x}}\RSB = \mathbb{E}\LSB\frac{\bm{x}^{\He} \mathbf{U}_{\mathbf{A}} \diag\left({[\tilde{B}]}_{1,1},\ldots,{[\tilde{B}]}_{n,n}\right) \mathbf{U}_{\mathbf{A}}^{\He} \bm{x}}{\bm{x}^{\He} \mathbf{A} \bm{x}}\RSB.
\label{diff_eigenbase_3}
\end{equation}
Hence, the case of \emph{equal} eigenbases is recovered.

Consequently, we proceed by considering without loss of generality that matrices $\mathbf{A}$ and $\mathbf{B}$ have the same eigenbases. However, it should be noted that elements ${[\tilde{B}]}_{i,i}, \hspace{0.1in} i=1,\ldots,n$ are not sorted in any particular order.

Focusing, now, on the derivation of a closed form expression of the expectation, we have that
\begin{equation}
\begin{aligned}
 \mathbb{E}\LSB\frac{\bm{x}^{\He} \mathbf{B} \bm{x}}{\bm{x}^{\He} \mathbf{A} \bm{x}}\RSB &= \mathbb{E}\LSB\frac{\sum_{i=1}^n {[\tilde{B}]}_{i,i} |x_i|^2}{\sum_{j=1}^n \lambda_j(\mathbf{A}) |x_j|^2}\RSB = \sum_{i=1}^n {[\tilde{B}]}_{i,i} \mathbb{E}\LSB\frac{|x_i|^2}{\sum_{j=1}^n \lambda_j(\mathbf{A}) |x_j|^2}\RSB \\
&= \sum_{i=1}^n {[\tilde{B}]}_{i,i} \mathbb{E}\LSB\frac{\partial}{\partial \lambda_i(\mathbf{A})} \ln\left(\sum_{j=1}^n \lambda_j(\mathbf{A}) |x_j|^2 \right)\RSB \\
&= \sum_{i=1}^n {[\tilde{B}]}_{i,i} \frac{\partial}{\partial \lambda_i(\mathbf{A})} \mathbb{E}\LSB\ln\left(\sum_{j=1}^n \lambda_j(\mathbf{A}) |x_j|^2 \right)\RSB.
\end{aligned}
\label{expec_1}
\end{equation}
Let us define random variable (RV) $X \triangleq \sum_{j=1}^n \lambda_j(\mathbf{A}) |x_j|^2$.~Using \cite[eq. (8)]{Abbe2012}, it is shown by induction that the Probability Density Function (PDF) of $X$ is the following
\begin{equation}
 p_X(x) = \sum_{k=1}^n \prod_{j \neq k} \frac{\lambda_k(\mathbf{A})^{n-2}}{\lambda_k(\mathbf{A}) - \lambda_j(\mathbf{A})} e^{-\frac{x}{\lambda_k(\mathbf{A})}}.
\label{pdf_X}
\end{equation}
As a result, the expectation of RV $\ln(X)$ is given by the expression that follows
\begin{equation}
 \mathbb{E}\LSB\ln(X)\RSB = \sum_{k=1}^n \int_0^{\infty} \ln(x) e^{-\frac{x}{\lambda_k(\mathbf{A})}} dx \prod_{j \neq k} \frac{\lambda_k(\mathbf{A})^{n-2}}{\lambda_k(\mathbf{A}) - \lambda_j(\mathbf{A})}.
\label{expect_2}
\end{equation}
Exploiting \cite[eq. (4.331.1)]{Gradshteyn2007}, expression \eqref{expect_2} becomes
\begin{equation}
 \mathbb{E}\LSB\ln(X)\RSB = \sum_{k=1}^n \frac{\lambda_k(\mathbf{A})^{n-1} \left(\ln(\lambda_k(\mathbf{A})) - \gamma\right)}{\prod_{j \neq k} \left(\lambda_k(\mathbf{A}) - \lambda_j(\mathbf{A})\right)}.
\label{expect_3}
\end{equation}
Taking, now, the partial derivative of \eqref{expect_3}, with respect to $\lambda_i(\mathbf{A})$, we obtain
\begin{equation}
\begin{aligned}
 \frac{\partial}{\partial \lambda_i(\mathbf{A})} \mathbb{E}\LSB\ln(X)\RSB &= \frac{\partial}{\partial \lambda_i(\mathbf{A})}\left\{\frac{\lambda_i(\mathbf{A})^{n-1} \left(\ln(\lambda_i(\mathbf{A})) - \gamma\right)}{\prod_{j \neq i} \left(\lambda_i(\mathbf{A}) - \lambda_j(\mathbf{A})\right)}\right\} \\
&+ \frac{\partial}{\partial \lambda_i(\mathbf{A})}\left\{\sum_{k=1,k \neq i}^n \frac{\lambda_k(\mathbf{A})^{n-1} \left(\ln(\lambda_k(\mathbf{A})) - \gamma\right)}{\prod_{j \neq k} \left(\lambda_k(\mathbf{A}) - \lambda_j(\mathbf{A})\right)}\right\}.
\end{aligned}
\label{partial_der_1}
\end{equation}
For the first term of \eqref{partial_der_1}, the following expression is obtained
\begin{equation}
\begin{aligned}
 \frac{\partial}{\partial \lambda_i(\mathbf{A})}\left\{\frac{\lambda_i(\mathbf{A})^{n-1} \left(\ln(\lambda_i(\mathbf{A})) - \gamma\right)}{\prod_{j \neq i} \left(\lambda_i(\mathbf{A}) - \lambda_j(\mathbf{A})\right)}\right\} &= \frac{\lambda_i(\mathbf{A})^{n-2} \left((n-1) \left(\ln(\lambda_i(\mathbf{A})) - \gamma\right) + 1\right)}{\prod_{j \neq i} \left(\lambda_i(\mathbf{A}) - \lambda_j(\mathbf{A})\right)} \\
&- \frac{\lambda_i(\mathbf{A})^{n-1} \left(\ln(\lambda_i(\mathbf{A})) - \gamma\right) \sum_{r=1,r \neq i}^n \prod_{j \neq i,r} \left(\lambda_i(\mathbf{A}) - \lambda_j(\mathbf{A})\right)}{{\left(\prod_{j \neq i} \left(\lambda_i(\mathbf{A}) - \lambda_j(\mathbf{A})\right)\right)}^2}.
\end{aligned}
\label{partial_der_2}
\end{equation}
It now remains to find the second term of \eqref{partial_der_1} in closed form.~We, thus, obtain the following
\begin{equation}
 \frac{\partial}{\partial \lambda_i(\mathbf{A})}\left\{\sum_{k=1,k \neq i}^n \frac{\lambda_k(\mathbf{A})^{n-1} \left(\ln(\lambda_k(\mathbf{A})) - \gamma\right)}{\prod_{j \neq k} \left(\lambda_k(\mathbf{A}) - \lambda_j(\mathbf{A})\right)}\right\} = \sum_{k=1, k \neq i}^n \frac{\partial}{\partial \lambda_i(\mathbf{A})} \left\{\frac{\lambda_k(\mathbf{A})^{n-1} \left(\ln(\lambda_k(\mathbf{A})) - \gamma\right)}{\prod_{j \neq k}\left(\lambda_k(\mathbf{A}) - \lambda_j(\mathbf{A})\right) }\right\}.
\label{partial_der_3}
\end{equation}
Given that $i \neq k$, the partial derivative appearing in the right hand side of \eqref{partial_der_3}, is given by the following expression
\small
\begin{equation}
 \frac{\partial}{\partial \lambda_i(\mathbf{A})} \left\{\frac{\lambda_k(\mathbf{A})^{n-1} \left(\ln(\lambda_k(\mathbf{A})) - \gamma\right)}{\prod_{j \neq k}\left(\lambda_k(\mathbf{A}) - \lambda_j(\mathbf{A})\right) }\right\} = - \frac{\lambda_k(\mathbf{A})^{n-1} \left(\ln(\lambda_k(\mathbf{A})) - \gamma\right) \frac{\partial}{\partial \lambda_i(\mathbf{A})} \left\{\prod_{j \neq k}\left(\lambda_k(\mathbf{A}) - \lambda_j(\mathbf{A}) \right) \right\} }{{\left(\prod_{j \neq k} \left(\lambda_k(\mathbf{A}) - \lambda_j(\mathbf{A})\right)\right)}^2},
\label{partial_der_4}
\end{equation}
\normalsize
where $\frac{\partial}{\partial \lambda_i(\mathbf{A})} \left\{\prod_{j \neq k}\left(\lambda_k(\mathbf{A}) - \lambda_j(\mathbf{A}) \right) \right\} = - \prod_{j \neq k,i} \left(\lambda_k(\mathbf{A}) - \lambda_j(\mathbf{A})\right)$.~This concludes the proof.
\end{proof} 
\end{lemma}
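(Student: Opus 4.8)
The plan is to evaluate the expectation in two movements: a reduction to a diagonal problem, followed by a logarithmic-derivative computation.

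First I would remove the mismatch between the eigenbases of $\mathbf{A}$ and $\mathbf{B}$ by invoking the unitary invariance of the standard complex Gaussian. Writing the eigendecomposition $\mathbf{A} = \mathbf{U}_{\mathbf{A}}\mathbf{\Lambda}_{\mathbf{A}}\mathbf{U}_{\mathbf{A}}^{\He}$ and substituting $\bm{x}\mapsto\mathbf{U}_{\mathbf{A}}^{\He}\bm{x}$ (which leaves the law $\mathcal{CN}(\mathbf{0}_n,\mathbf{I}_n)$ unchanged), the denominator becomes $\sum_k\lambda_k(\mathbf{A})|x_k|^2$ and the numerator becomes $\bm{x}^{\He}\tilde{\mathbf{B}}\bm{x}$ with $\tilde{\mathbf{B}}=\mathbf{U}_{\mathbf{A}}^{\He}\mathbf{B}\mathbf{U}_{\mathbf{A}}$. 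Writing each $x_i=|x_i|\,e^{\mathrm{j}\phi_i}$, the phases are i.i.d. uniform and independent of the magnitudes, so averaging over the phases annihilates every off-diagonal contribution (those carrying a factor $e^{\mathrm{j}(\phi_j-\phi_i)}$ with $i\neq j$) and only the diagonal entries $[\tilde{\mathbf{B}}]_{i,i}$ survive. By linearity, the task reduces to evaluating $\mathbb{E}\big[|x_i|^2/\sum_j\lambda_j(\mathbf{A})|x_j|^2\big]$ for each $i$.

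The key device is the identity $|x_i|^2/\sum_j\lambda_j(\mathbf{A})|x_j|^2 = \partial_{\lambda_i(\mathbf{A})}\ln\!\big(\sum_j\lambda_j(\mathbf{A})|x_j|^2\big)$, which collapses the scalar problem into a single derivative. After justifying the interchange of $\mathbb{E}$ and $\partial_{\lambda_i(\mathbf{A})}$, it remains to compute $\mathbb{E}[\ln X]$ for $X=\sum_j\lambda_j(\mathbf{A})|x_j|^2$. Since each $|x_j|^2$ is a unit-mean exponential and the eigenvalues are distinct, $X$ is a weighted sum of independent exponentials; I would obtain its density by a partial-fraction (equivalently, induction) argument and then evaluate $\mathbb{E}[\ln X]=\int_0^\infty(\ln x)\,p_X(x)\,dx$ with the standard integral $\int_0^\infty e^{-\mu x}\ln x\,dx=-(\gamma+\ln\mu)/\mu$. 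Specializing $\mu=1/\lambda_k(\mathbf{A})$ turns each term into a factor $\lambda_k(\mathbf{A})(\ln\lambda_k(\mathbf{A})-\gamma)$, yielding a compact closed form for $\mathbb{E}[\ln X]$ as a sum over the eigenvalues.

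The final step, and the main source of bookkeeping, is differentiating this closed form with respect to a single $\lambda_i(\mathbf{A})$. I would split the eigenvalue sum into the $k=i$ term, where both numerator and denominator depend on $\lambda_i(\mathbf{A})$ and the quotient rule produces the first two contributions of the claimed formula, and the $k\neq i$ terms, where only the factor $(\lambda_k(\mathbf{A})-\lambda_i(\mathbf{A}))$ in the denominator carries the dependence and generates the third contribution; here I would use $\partial_{\lambda_i(\mathbf{A})}\prod_{j\neq k}(\lambda_k(\mathbf{A})-\lambda_j(\mathbf{A}))=-\prod_{j\neq k,i}(\lambda_k(\mathbf{A})-\lambda_j(\mathbf{A}))$. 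The hard part is not any single estimate but keeping these rational-function derivatives organized and supplying the analytic justification for differentiating under the expectation, which follows from dominated convergence once one observes that the ratio and its $\lambda_i(\mathbf{A})$-derivative are uniformly integrable on a compact neighborhood of the (distinct, strictly positive) eigenvalues.
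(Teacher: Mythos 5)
Your proposal is correct and follows essentially the same route as the paper's own proof: the unitary-invariance reduction to a diagonal $\mathbf{\Lambda}_{\mathbf{A}}$ with phase averaging killing the off-diagonal terms of $\tilde{\mathbf{B}}$, the identity $|x_i|^2/\sum_j\lambda_j(\mathbf{A})|x_j|^2 = \partial_{\lambda_i(\mathbf{A})}\ln\big(\sum_j\lambda_j(\mathbf{A})|x_j|^2\big)$, the hypoexponential density of $X$ with the integral $\int_0^\infty e^{-\mu x}\ln x\,dx=-(\gamma+\ln\mu)/\mu$, and the final term-by-term differentiation with $\partial_{\lambda_i(\mathbf{A})}\prod_{j\neq k}\big(\lambda_k(\mathbf{A})-\lambda_j(\mathbf{A})\big)=-\prod_{j\neq k,i}\big(\lambda_k(\mathbf{A})-\lambda_j(\mathbf{A})\big)$. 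Your explicit dominated-convergence justification for exchanging expectation and derivative is a detail the paper glosses over, but otherwise the two arguments coincide.
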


\ifCLASSOPTIONcaptionsoff
  \newpage
\fi

\bibliography{bib/allCitations}

\begin{thebibliography}{10}
\providecommand{\url}[1]{#1}
\csname url@samestyle\endcsname
\providecommand{\newblock}{\relax}
\providecommand{\bibinfo}[2]{#2}
\providecommand{\BIBentrySTDinterwordspacing}{\spaceskip=0pt\relax}
\providecommand{\BIBentryALTinterwordstretchfactor}{4}
\providecommand{\BIBentryALTinterwordspacing}{\spaceskip=\fontdimen2\font plus
\BIBentryALTinterwordstretchfactor\fontdimen3\font minus
  \fontdimen4\font\relax}
\providecommand{\BIBforeignlanguage}[2]{{%
\expandafter\ifx\csname l@#1\endcsname\relax
\typeout{** WARNING: IEEEtran.bst: No hyphenation pattern has been}%
\typeout{** loaded for the language `#1'. Using the pattern for}%
\typeout{** the default language instead.}%
\else
\language=\csname l@#1\endcsname
\fi
#2}}
\providecommand{\BIBdecl}{\relax}
\BIBdecl

\bibitem{federal2002spectrum}
{Federal Communications Commission}, ``Spectrum policy task force report,
  {F}{C}{C} 02-155,'' 2002.

\bibitem{Mitola1999}
J.~Mitola and J.~Maguire, G.Q., ``Cognitive radio: making software radios more
  personal,'' \emph{IEEE Personal Communications}, vol.~6, no.~4, pp. 13--18,
  Aug. 1999.

\bibitem{Haykin2005}
S.~Haykin, ``Cognitive radio: brain-empowered wireless communications,''
  \emph{IEEE Journal on Selected Areas in Communications}, vol.~23, no.~2, pp.
  201--220, Feb. 2005.

\bibitem{Goldsmith2009}
A.~Goldsmith, S.~Jafar, I.~Maric, and S.~Srinivasa, ``Breaking spectrum
  gridlock with cognitive radios: An information theoretic perspective,''
  \emph{Proceedings of the IEEE}, vol.~97, no.~5, pp. 894--914, May 2009.

\bibitem{Biglieri2012}
E.~Biglieri, A.~J. Goldsmith, L.~J. Greenstein, N.~Mandayam, and H.~V. Poor,
  \emph{Principles of Cognitive Radio}.\hskip 1em plus 0.5em minus 0.4em\relax
  Cambridge University Press, 2012.

\bibitem{Zhang2008}
R.~Zhang and Y.-C. Liang, ``Exploiting multi-antennas for opportunistic
  spectrum sharing in cognitive radio networks,'' \emph{IEEE Journal of
  Selected Topics in Signal Processing}, vol.~2, no.~1, pp. 88--102, Feb. 2008.

\bibitem{Kim2011}
S.-J. Kim and G.~Giannakis, ``Optimal resource allocation for {M}{I}{M}{O} ad
  hoc cognitive radio networks,'' \emph{IEEE Transactions on Information
  Theory}, vol.~57, no.~5, pp. 3117--3131, May 2011.

\bibitem{Zhang2009}
R.~Zhang, ``On peak versus average interference power constraints for
  protecting primary users in cognitive radio networks,'' \emph{IEEE
  Transactions on Wireless Communications}, vol.~8, no.~4, pp. 2112--2120, Apr.
  2009.

\bibitem{Huang2010}
Y.~Huang and D.~Palomar, ``Rank-constrained separable semidefinite programming
  with applications to optimal beamforming,'' \emph{IEEE Transactions on Signal
  Processing}, vol.~58, no.~2, pp. 664--678, Feb. 2010.

\bibitem{Le2008}
L.~B. Le and E.~Hossain, ``Resource allocation for spectrum underlay in
  cognitive radio networks,'' \emph{IEEE Transactions on Wireless
  Communications}, vol.~7, no.~12, pp. 5306--5315, Dec. 2008.

\bibitem{Wang2013}
K.-Y. Wang, N.~Jacklin, Z.~Ding, and C.-Y. Chi, ``Robust {M}{I}{S}{O} transmit
  optimization under outage-based {Q}o{S} constraints in two-tier heterogeneous
  networks,'' \emph{IEEE Transactions on Wireless Communications}, vol.~12,
  no.~4, pp. 1883--1897, Apr. 2013.

\bibitem{Wang2010}
B.~Wang, Y.~Wu, and K.~R. Liu, ``Game theory for cognitive radio networks: An
  overview,'' \emph{Computer Networks}, vol.~54, no.~14, pp. 2537 -- 2561,
  2010.

\bibitem{Scutari2010}
G.~Scutari and D.~Palomar, ``{M}{I}{M}{O} cognitive radio: A game theoretical
  approach,'' \emph{IEEE Transactions on Signal Processing}, vol.~58, no.~2,
  pp. 761--780, Feb. 2010.

\bibitem{Zhong2011}
W.~Zhong, Y.~Xu, and H.~Tianfield, ``Game-theoretic opportunistic spectrum
  sharing strategy selection for cognitive {M}{I}{M}{O} multiple access
  channels,'' \emph{IEEE Transactions on Signal Processing}, vol.~59, no.~6,
  pp. 2745--2759, Jun. 2011.

\bibitem{cept2014report}
E.~CEPT, ``Licensed shared access ({L}{S}{A}),'' \emph{ECC Report}, vol. 205,
  2014.

\bibitem{holdren2012realizing}
J.~Holdren and E.~Lander, ``Realizing the full potential of government-held
  spectrum to spur economic growth,'' Technical Report, Tech. Rep., 2012.

\bibitem{Radner1962}
R.~Radner, ``Team decision problems,'' \emph{The Annals of Mathematical
  Statistics}, vol.~33, no.~3, pp. pp. 857--881, 1962.

\bibitem{Ho1980}
Y.-C. Ho, ``Team decision theory and information structures,''
  \emph{Proceedings of the IEEE}, vol.~68, no.~6, pp. 644--654, 1980.

\bibitem{Zakhour2010}
R.~Zakhour and D.~Gesbert, ``Team decision for the cooperative {M}{I}{M}{O}
  channel with imperfect {C}{S}{I}{T} sharing,'' in \emph{Information Theory
  and Applications Workshop (ITA), 2010}, Jan. 2010, pp. 1--6.

\bibitem{DeKerret2013}
P.~de~Kerret and D.~Gesbert, ``{C}{S}{I} sharing strategies for transmitter
  cooperation in wireless networks,'' \emph{IEEE Wireless Communications},
  vol.~20, no.~1, pp. 43--49, Feb. 2013.

\bibitem{Filippou2013}
M.~C. Filippou, G.~A. Ropokis, and D.~Gesbert, ``A team decisional beamforming
  approach for underlay cognitive radio networks,'' in \emph{IEEE 24th
  International Symposium on Personal Indoor and Mobile Radio Communications
  (PIMRC), 2013}, Sep. 2013, pp. 575--579.

\bibitem{DeKerret2014}
P.~de~Kerret, M.~Filippou, and D.~Gesbert, ``Statistically coordinated
  precoding for the {M}{I}{S}{O} cognitive radio channel,'' in \emph{48th
  Asilomar Conference on Signals, Systems and Computers, 2014}, Nov. 2014, pp.
  1083--1087.

\bibitem{Abramowitz}
M.~Abramowitz and I.~A. Stegun, \emph{Handbook of mathematical
  functions}.\hskip 1em plus 0.5em minus 0.4em\relax Dover publications, 1965.

\bibitem{cover2012elements}
T.~M. Cover and J.~A. Thomas, \emph{Elements of information theory}.\hskip 1em
  plus 0.5em minus 0.4em\relax John Wiley \& Sons, 2012.

\bibitem{Loyka2001}
S.~Loyka, ``Channel capacity of {M}{I}{M}{O} architecture using the exponential
  correlation matrix,'' \emph{IEEE Communications Letters}, vol.~5, no.~9, pp.
  369--371, Sep. 2001.

\bibitem{Mallik2004}
R.~Mallik, M.~Win, J.~Shao, M.-S. Alouini, and A.~Goldsmith, ``Channel capacity
  of adaptive transmission with maximal ratio combining in correlated
  {R}ayleigh fading,'' \emph{IEEE Transactions on Wireless Communications},
  vol.~3, no.~4, pp. 1124--1133, Jul. 2004.

\bibitem{Wang2012}
J.~Wang, S.~Jin, X.~Gao, K.-K. Wong, and E.~Au, ``Statistical eigenmode-based
  {S}{D}{M}{A} for two-user downlink,'' \emph{IEEE Transactions on Signal
  Processing}, vol.~60, no.~10, pp. 5371--5383, Oct. 2012.

\bibitem{Abbe2012}
E.~Abbe, S.-L. Huang, and E.~Telatar, ``Proof of the outage probability
  conjecture for {M}{I}{S}{O} channels,'' in \emph{2012 IEEE Information Theory
  Workshop (ITW)}, Sep. 2012, pp. 65--69.

\bibitem{Gradshteyn2007}
I.~S. Gradshteyn and I.~M. Ryzhik, \emph{Table of integrals, series, and
  products}, 7th~ed.\hskip 1em plus 0.5em minus 0.4em\relax Elsevier/Academic
  Press, Amsterdam, 2007.

\end{thebibliography}
\bibliographystyle{IEEEtran}

\end{document}